\newtheorem{theorem}{Theorem}[section]
\newtheorem{lemma}[theorem]{Lemma}
\newtheorem{proposition}[theorem]{Proposition}
\newtheorem{corollary}[theorem]{Corollary}
\theoremstyle{definition}
\newtheorem{definition}[theorem]{Definition}
\newtheorem{example}[theorem]{Example}
\theoremstyle{remark}
\newtheorem{remark}[theorem]{Remark}
\def\P{\mathcal{P}}
\def\Fq{\mathbf{F}_q}
\def\Fq{{\mathbb F}_q}
\def\Fqn{{\mathbb F}_q^n}
\def\AA{{\mathbb A}}
\def\FF{{\mathbb F}}
\def\PP{{\mathbb P}}
\newcommand{\Ev}{\operatorname{Ev}}
\newcommand{\codim}{\operatorname{codim}}
\newcommand{\wt}{\operatorname{wt}}
\newcommand{\supp}{\mathrm{Supp}}
\def\PP{{\mathbb P}}
\def\Z{\mathbb{Z}}
\begin{document}
	
\title{Pure Resolutions, Linear Codes, and Betti Numbers}
	
	
	\author{Sudhir R. Ghorpade}
	\address{Department of Mathematics, 
		Indian Institute of Technology Bombay,\newline \indent
		Powai, Mumbai 400076, India}
\email{\href{mailto:srg@math.iitb.ac.in}{srg@math.iitb.ac.in}}
\thanks{Sudhir Ghorpade is partially supported by DST-RCN grant INT/NOR/RCN/ICT/P-03/2018 from the Dept. of Science \& Technology, Govt. of India, MATRICS grant MTR/2018/000369 from the Science \& Engg. Research Board, and IRCC award grant 12IRAWD009 from IIT Bombay.}
	\author{Prasant Singh}
\address{Department of Mathematics and Statistics, 
		UiT The Arctic University of Norway, 
\newline \indent
N-9037 Troms{\o}, Norway}
\thanks{Prasant Singh is supported by a postdoctoral  fellowship  from  grant 280731 from the Research Council of Norway.}
\email{\href{mailto:psinghprasant@gmail.com}{psinghprasant@gmail.com}}
	
\subjclass[2010]{13D02, 94B05, 05B25, 51E20}

	\date{\today}
	
	\begin{abstract}
		We consider the minimal free resolutions of Stanley-Reisner rings associated to linear codes and give an intrinsic characterization of linear codes having a pure resolution.
We use this characterization to quickly deduce 
the minimal free resolutions of Stanley-Reisner rings associated to  MDS codes as well as constant weight 
codes. 
We also deduce that the minimal free resolutions of Stanley-Reisner rings of 
first order Reed-Muller codes
are pure, and explicitly describe the Betti numbers. Further, we  show that in the case of higher order Reed-Muller codes, the minimal free resolutions are almost always not pure.  
The nature of the minimal free resolution of Stanley-Reisner rings corresponding to several classes of two-weight codes, besides the first order Reed-Muller codes, is also determined. 
%
	\end{abstract}
	
	\maketitle
	\section{Introduction}
One of the interesting 
 developments  in algebraic coding theory in the recent past is the association of a fine set of invariants, called \emph{Betti numbers}, to linear error correcting codes. This is due to Johnsen and Verdure \cite{JV1} and their idea is as follows. Some basic terminology used below is reviewed in the next section. 

Let $C$ be a $q$-ary linear code of length $n$ and dimension $k$ and let $H$ be a parity check matrix of $C$. 
The vector matroid corresponding to $H$ is a pure simplicial complex, say $\Delta$, and its Stanley-Reisner ring $R_\Delta$ over $\Fq$ is 
a finitely generated standard graded  $\Fq$-algebra of dimension $n-k$. As such it has a minimal graded free resolution. Moreover,  $\Delta$ is shellable, thanks to a classical result that goes back to Provan \cite{Provan} (see also Bj\" orner 
\cite[\S 7.3]{AB1}). Hence $R_\Delta$ is Cohen-Macaulay. (See, for example, \cite[Ch. 6, \S 2]{GSSV}).  
So by the Auslander-Buchsbaum formula, the length of any minimal free resolution of $R_\Delta$ is $n - (n-k) = k$, and it looks like 
\begin{equation}
\label{eq:resol}
F_k {\longrightarrow}F_{k-1}\longrightarrow \cdots \longrightarrow F_1\longrightarrow F_0 \longrightarrow R_{\Delta} \longrightarrow 0
\end{equation}
where $F_0= R:= \Fq[X_1, \dots , X_n]$ and each $F_i$ is a graded free $R$-module of the form 
\begin{equation}
\label{eq:FiBetaij}
F_i = \bigoplus_{j \in \mathbb{Z}}R(-j)^{\beta_{i,j}} \quad \text{for } i=0, 1, \dots , k.
\end{equation}
The nonnegative integers $\beta_{i,j}$ thus obtained depend only on $C$ (and not on the choice of $H$ or the minimal free resolution  of $R_\Delta$), and are the \emph{Betti numbers} of~$C$. Thus we may refer to \eqref{eq:resol} as a (graded minimal free) \emph{resolution} of $C$. Such a resolution is said to be \emph{pure} of type $(d_0, d_1, \ldots, d_k)$ if for each $i=0,1, \dots , k$, the Betti number $\beta_{i,j}$ is nonzero if and only if $j=d_i$. If, in addition, 
$d_0, d_1, \dots , d_k$ are consecutive, 
then the resolution is said to be \emph{linear}.   Johnsen and Verdure \cite{JV1} showed that the Betti numbers of a code $C$ contain information about all the generalized Hamming weights $d_i(C)$ of $C$. In fact, they showed that  
\begin{equation}
\label{eq: BettiGHW}
d_i(C) =\min\{j:\beta_{i,j}\neq 0\} \quad \text{for } i=1, \dots , k.
\end{equation}
More recent work of Johnsen, Roksvold and Verdure \cite{JRV} shows that the Betti numbers of $C$ and its elongations determine the so-called generalized weight polynomial of $C$. Thus, if we 
combine this with the results of Jurrius and Pellikaan \cite{JP}, then we obtain a direct relation between  the generalized weight enumerator of $C$ and the Betti numbers  of $C$ and of its elongations. 

It is clear therefore that explicit determination of Betti numbers of codes would be useful and interesting. On the other hand, it is usually a hard problem, except in some special cases. 
The simplest class of codes for which Betti numbers are completely determined 
is that of MDS codes where the minimal free resolution is linear. The next case is that of simplex codes or dual Hamming codes, which are essentially the prototype of constant weight codes (indeed, by a classical result of Bonisoli \cite{Bo}, every constant weight code is a concatenation of simplex codes, possibly with added $0$-coordinates). 
For such codes, the Betti numbers were explicitly determined by Johnsen and Verdure in another paper \cite{JV2}. In this case, it turns out that the resolution is pure, although not necessarily linear. 

In general, Betti numbers of pure resolutions are relatively easy to determine, thanks to a formula of Herzog and K{\"{u}}hl  \cite{HK}, which in the case of linear codes provides an expression for the Betti numbers in terms of the generalized Hamming weights. So the result for simplex codes can be deduced from it if one knows that their (minimal free) resolutions are necessarily pure. 
Partly with this in view, we consider the question of obtaining an intrinsic characterization for a linear code to have a pure resolution. A complete characterization is given in Theorem~\ref{thm: pure}. 
This is then applied to show that the first order Reed-Muller codes have a pure resolution and all their Betti numbers can be described explicitly. On the other hand, we show that 
Reed-Muller codes of order $2$ or more do not, in general, 
have a pure resolution. As a corollary, it is seen that the property of admitting a pure resolution is not preserved when passing to the dual. 

The first order Reed-Muller codes are examples of two-weight codes, and it is natural to ask if a similar result holds for every two-weight code. However, unlike constant weight codes, the structure of two-weight codes is far more complicated and it is a topic of considerable research in coding theory and finite projective geometry. We refer to the survey of Calderbank and Kantor \cite{CK} and the references therein for a variety of examples of two-weight codes. We also take up the question of determining the Betti numbers of many of these codes. It is seen that the resolution is not always pure and thus we can not appeal to the Herzog-K{\"{u}}hl formula. Nonetheless, we succeed in determining the Betti numbers of many two-weight codes, partly by  using a set of equations due to Boij and S\" oderberg \cite{BS}. It appears that the technique of Boij-S\" oderberg equations used here could be fruitful in the determination of Betti numbers of many important classes of linear codes. 

We remark that although our results on the Betti numbers of simplex and first order Reed-Muller codes using the Herzog-K{\"{u}}hl formula were obtained independently in early 2015, Trygve Johnsen \cite{J} has informed us that similar formulas are obtained in the Ph.D. thesis of Armenoff \cite{Arm} and the Master's thesis of Karpova \cite{AK}. 
In any case, our emphasis here is on the general characterization of purity, applications to Reed-Muller codes that are not only of first order, but also of higher order, and the determination of Betti numbers of many two-weight codes.

\section{Preliminaries}
Fix, throughout this paper, positive integers $n,k$ with $k\le n$ and a finite field $\Fq$ with $q$ elements. We denote by 
$[n]$ the set $\{1, \ldots, n\}$ of first $n$ positive integers. Also, $2^{[n]}$ denotes the set of all subsets of $[n]$. For any finite set $\sigma$, we denote by $|\sigma|$ the cardinality of $\sigma$. 
By a $[n, k]_q$-code, we shall mean a $q$-ary linear code of length $n$ and dimension $k$, i.e., a $k$-dimensional subspace of $\Fq^n$. 

\subsection{Codes and Matroids}\label{subsec:2.1}
Let $C$ be a $[n, k]_q$-code and
let $H$ be a parity check matrix of $C$.
For $i\in [n]$, let $H_i$ denote the $i$-th column of $H$. 
Define 
\[
\Delta :=
\{\sigma \in 2^{[n]} : \{H_i : i\in \sigma\} \text{ is linearly  independent over }\Fq\}.
\]
The ordered pair $({[n]},\Delta)$ is a matroid, and we call it the \emph{matroid associated to the code} $C$. Elements of $\Delta$ are called \emph{independent sets} of this matroid. A maximal independent  set in $\Delta$ is called a \emph{basis} of the matroid. 
It is well-known that every basis of a matroid has the same cardinality and this number is called the \emph{rank} of the matroid. 
 If $\sigma\subseteq {[n]}$ 
and if we let 
$\Delta|\sigma := \{ \tau\in \Delta: \tau \subseteq \sigma\}$, then  $(\sigma,\Delta|\sigma)$ is a matroid, called the \emph{restriction} of the matroid $({[n]},\Delta)$ to $\sigma$; the rank of this restricted matroid is called the \emph{rank} of 
$\sigma$ and denoted by $r(\sigma)$; 
the difference $|\sigma|-r(\sigma)$ is denoted by $\eta(\sigma)$ and called the \emph{nullity} of $\sigma$. Evidently, the rank of the matroid $({[n]},\Delta)$ is the rank of $H$, which is $n-k$, and so the nullity of any  $\sigma\subseteq {[n]}$ ranges from $0$ to $k$. 
For $0\le i \le k$, 
we define 
\[
N_i:=\{\sigma\subseteq {[n]}: \eta(\sigma)=i\}. 
\]

\subsection{Stanley-Reisner Rings and Betti Numbers}
Suppose $({[n]},\Delta)$ is as in the previous subsection. Then $\Delta$ is a simplicial complex. We denote by $I_\Delta$ the ideal of the polynomial ring $R:= \Fq[X_1,\ldots, X_n]$ generated by all monomials of the form $\prod_{i\in \tau} X_i$, where $\tau\in 2^{[n]}\setminus \Delta$. The quotient $R_\Delta=R/I_\Delta$  is called the \emph{Stanley-Reisner ring} or the \emph{face ring} associated to $\Delta$. As noted in the Introduction, $R_\Delta$ has a minimal free resolution of the form \eqref{eq:resol}. Furthermore, since $I_\Delta$ is a monomial ideal generated by squarefree monomials, we can choose the free $R$-modules $F_i$ in  \eqref{eq:resol} to be not only $\Z$-graded as in \eqref{eq:FiBetaij}, but also $\Z^n$-graded so as to write
\begin{equation}
\label{eq:FiBetaSigma}
F_i = \bigoplus_{\sigma \in \mathbb{Z}^n}R(-\sigma)^{\beta_{i,\sigma}}  \quad \text{for } i=1, \dots , k.
\end{equation}
In fact, the $\Z^n$-graded Betti numbers $\beta_{i,\sigma}$ have the property that $\beta_{i,\sigma} = 0$ unless the $n$-tuple $\sigma = (\sigma_1, \dots , \sigma_n)$ has all its coordinates in $\{0,1\}$. Such $n$-tuples in $\{0,1\}^n$ can be naturally identified with subsets of ${[n]}$ where $(\sigma_1, \dots , \sigma_n)$ corresponds to the subset $\{i\in {[n]}: \sigma_i=1\}$ of ${[n]}$ that we shall also denote by $\sigma$. 
Thus, 
we may index the direct sum in  \eqref{eq:FiBetaSigma} by $\sigma \in 2^{[n]}$.  The relation between the $\Z$-graded and $\Z^n$-graded Betti numbers is simply that 
%
\begin{equation}
\label{eq: betaij-betasigma}
 \beta_{i,j}=\sum_{|\sigma|=j}\beta_{i,\sigma} \quad \text{for } i=1, \dots , k.
\end{equation}
Johnsen and Verdure \cite{JV1} proved an important relationship between the $\Z^n$-graded Betti numbers and subsets of a given nullity. Namely, for $1\le i \le k$ and $\sigma \subseteq {[n]}$, 
\begin{equation}
\label{eq: nonzerobetti}
\beta_{i,\sigma}\neq 0\iff \sigma\in N_i\text{ and }\sigma \text{ is a minimal element of }N_i.
\end{equation}
This result, which can perhaps be traced back to Stanley \cite[p. 59]{St},  will be very useful for us in the sequel. 
Note also that if $\mu_1, \dots , \mu_t$ are squarefree monomials in $R$ which constitute a minimal set of generators of $I_{\Delta}$ and if $\sigma_j \in 2^{[n]}$ denotes the support of $\mu_j$ (so that $\mu_j = \prod_{i \in \sigma_j} X_i$) for $1\le j \le t$, then without loss of generality, we can take the first free $R$-module in \eqref{eq:FiBetaSigma} to be 
\begin{equation}
\label{eq:F1}
F_1 = \bigoplus_{j=1}^t R(-\sigma_j).
\end{equation}

Finally, we recall the following general result, which was alluded to in the Introduction. A proof can be found in \cite{BS}. We note that a graded module $M$ over a polynomial ring $R$ having projective dimension $k$ will have a minimal free resolution such as \eqref{eq:resol} with $R_\Delta$ replaced by $M$, except in this case $F_0$ may not be equal to $R$.  In general, we let $\beta_i:= \mathrm{rank}_R(F_i) = \sum_{j} \beta_{i,j}$. Note that  if $M$ has a pure resolution of type $(d_0, d_1, \dots , d_k)$, then $\beta_i := \beta_{i, d_i}$ for $i=0,1, \dots , k$. 
\begin{theorem}[Boij-S\" oderberg]\label{boijsoderberg}
	Let $R$ be the 
polynomial ring over a field and let $M$ be a graded $R$-module of finite projective dimension $k$. Then $M$ is  Cohen-Macaulay if and only if its graded Betti numbers satisfy the equations
	\begin{equation}
	\label{eq: boijsoderberg}
	\sum\limits_{i=0}^k\sum\limits_{j\in\mathbb{Z}}(-1)^ij^\ell\beta_{i,j}=0\quad \text{for }\ell=0,\ldots, k-1.
	\end{equation}
	In particular, if the minimal free resolution of $M$ is pure of type $(d_0, d_1,\ldots, d_k)$,
then 
$\eqref{eq: boijsoderberg}$ implies the Herzog-K{\"{u}}hl formula \cite{HK}:
	\begin{equation}
	\label{eq: herzogkuhl}
	\beta_i =\beta_0\left\vert\prod\limits_{j \neq i} \frac{d_j}{(d_j-{d_i})}\right\vert \quad \text{for } i= 1,\ldots,k, 
	\end{equation}
\end{theorem}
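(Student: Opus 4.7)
The plan is to reduce everything to a computation with the Hilbert series. From the minimal free resolution of $M$, additivity of Hilbert series on short exact sequences yields
\begin{equation*}
HS_M(t) \;=\; \frac{K(t)}{(1-t)^n},
\qquad
K(t) \;:=\; \sum_{i=0}^{k}(-1)^i\sum_{j\in\mathbb{Z}}\beta_{i,j}\,t^j.
\end{equation*}
By Hilbert--Serre, this same series admits a reduced expression $Q(t)/(1-t)^{\dim M}$ with $Q(1)\neq 0$, so the order of vanishing of $K(t)$ at $t=1$ equals $n-\dim M$.

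By the Auslander--Buchsbaum formula, $\operatorname{depth}(M)=n-k$, and since depth never exceeds dimension we have $\dim M\ge n-k$; hence $K(t)$ vanishes at $t=1$ to order at most $k$. Thus $M$ is Cohen--Macaulay iff $\dim M=n-k$, iff $K(t)$ vanishes at $t=1$ to order exactly $k$, iff $K(1)=K'(1)=\cdots=K^{(k-1)}(1)=0$. A direct computation gives $K^{(\ell)}(1)=\sum_{i,j}(-1)^i j(j-1)\cdots(j-\ell+1)\beta_{i,j}$. Since the falling factorials $\{j(j-1)\cdots(j-\ell+1)\}_{\ell=0}^{k-1}$ and the monomials $\{j^\ell\}_{\ell=0}^{k-1}$ span the same $k$-dimensional subspace of $\mathbb{Q}[j]$, the vanishing of all these derivatives is equivalent to the system \eqref{eq: boijsoderberg}. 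This establishes the ``if and only if'' part.

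For the Herzog--K\"uhl consequence, assume the resolution is pure of type $(d_0,\ldots,d_k)$, so $K(t)=\sum_{i=0}^{k}(-1)^i\beta_i\, t^{d_i}$ and \eqref{eq: boijsoderberg} reduces to the homogeneous linear system
\begin{equation*}
\sum_{i=0}^{k}(-1)^i\beta_i\, d_i^{\,\ell}\;=\;0,\qquad \ell=0,1,\ldots,k-1.
\end{equation*}
The coefficient matrix $(d_i^{\,\ell})_{\ell,i}$ has rank $k$, since any $k$ columns form an invertible Vandermonde block (the $d_i$ are pairwise distinct for a pure minimal resolution), so the nullspace is one-dimensional. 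The Lagrange interpolation identity $\sum_{i}d_i^{\ell}/\prod_{j\neq i}(d_i-d_j)=0$ for $\ell<k$ exhibits an explicit spanning vector, forcing $(-1)^i\beta_i \propto \prod_{j\neq i}(d_i-d_j)^{-1}$. Forming $\beta_i/\beta_0$ and using $d_0<d_1<\cdots<d_k$ to rewrite $\prod_{j\neq i}(d_i-d_j)=(-1)^{k-i}\prod_{j\neq i}|d_i-d_j|$ absorbs all signs into an absolute value, producing \eqref{eq: herzogkuhl}.

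The Hilbert-series reduction of Cohen--Macaulayness to a pole-order statement, and from there to vanishing of $K^{(\ell)}(1)$, is classical and essentially forced; no single step there is hard. The main technical care lies in the last paragraph: the sign bookkeeping in the Vandermonde/Lagrange inversion and the verification that the resulting proportionality constant, after replacing $\beta_0$, really matches the exact absolute-value product that appears in \eqref{eq: herzogkuhl}.
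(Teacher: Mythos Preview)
The paper does not actually prove this theorem; it is quoted as a known result with the sentence ``A proof can be found in \cite{BS}.'' So there is no in-paper argument to compare against.

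That said, your proof is correct and is essentially the standard argument one finds in the cited references. Writing the Hilbert series as $K(t)/(1-t)^n$, identifying the order of vanishing of $K$ at $t=1$ with $n-\dim M$ via the reduced Hilbert--Serre form, and then invoking Auslander--Buchsbaum to translate Cohen--Macaulayness into $\dim M = n-k$ is exactly the intended route; the passage from derivatives $K^{(\ell)}(1)$ to the monomial equations \eqref{eq: boijsoderberg} via the change of basis between falling factorials and powers is clean. For the Herzog--K\"uhl part, your Vandermonde/Lagrange argument is the classical one, and the sign bookkeeping you sketch is correct: one obtains $\beta_i/\beta_0 = \prod_{j\ne 0}|d_j-d_0|\big/\prod_{j\ne i}|d_j-d_i|$, which, under the normalization $d_0=0$ used throughout the paper's applications (see the computations in Corollaries~\ref{Cor: MDS} and \ref{cor: constwt}, where the product is taken over $j\in\{1,\dots,k\}\setminus\{i\}$), agrees with \eqref{eq: herzogkuhl}. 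The only cosmetic point worth flagging is that the index set in \eqref{eq: herzogkuhl} is meant to exclude $j=0$; your derivation yields the formula in its invariant form and then specializes correctly.
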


As noted in the Introduction, Stanley-Reisner rings associated to  linear codes (or more generally, simplicial complexes corresponding to matroids) are Cohen-Macaulay, and hence the above theorem is applicable; moreover, in this 
case, $\beta_0=1$. If a $[n,k]_q$-code $C$ has a pure resolution of type $(d_0, \dots , d_k)$, then $d_0=0$ and for $1\le i \le k$,  $d_i$ is precisely the $i$-th generalized Hamming weight of $C$, thanks to \eqref{eq: BettiGHW}; we will refer to $\beta_i  = \beta_{i, d_i}$ as the \emph{Betti numbers of $C$} in this case. 

\section{Pure Resolution of Linear Codes}
In this section we will give a characterization of the purity of the resolution of the Stanley-Reisner ring associated to a linear code in terms of the support weight of certain subcodes of the code. We will then outline some simple applications. 

Let $C$ be a $[n,k]_q$-code and let $H=[H_1 \ldots H_n]$ be a parity check matrix of $C$, where, as before, $H_i$ denotes the $i${th} column of $H$.
For any  subset $\sigma $ of $[n]$, define  
$S(\sigma)$ to be the subspace $\langle H_i: i\in \sigma \rangle$ of $\mathbb{F}_q^{n-k}$ spanned by the columns of $H$ indexed by $\sigma$. Note that $r(\sigma) = \dim S(\sigma)$. 
Let us also define 
a related subspace  of $\Fq^n$ by 
$$
\widehat{S}(\sigma):=\{x=(x_1, \dots , x_n)\in \mathbb{F}_q^n: x_i=0 \mbox{ for }i\notin\sigma\mbox{ and }\sum_{i\in \sigma} x_iH_i=0\}.
$$ 
Recall that for any subcode $D$ of $C$, i.e., a subspace $D$ of $C$, 
 the \emph{support} of $D$ is the set $\supp(D)$ of all $i\in [n]$ for which there is 
$x=(x_1, \dots , x_n) \in D$ with $x_i\ne0$;
further, 
we let $\wt(D):=|\supp(D)|$, and call this the \emph{weight} of $D$. 

\begin{lemma}\label{support}
	Let $\sigma\subseteq [n]$. Then $\widehat{S}(\sigma)$ is a subcode of $C$ and $\supp(\widehat{S}(\sigma))\subseteq \sigma$.
\end{lemma}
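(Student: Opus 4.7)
The proof is essentially a direct unwinding of definitions, so my plan is just to make the two verifications precise.

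First I would check that $\widehat{S}(\sigma)$ is a subcode of $C$. To see it is an $\Fq$-subspace of $\Fq^n$, note that both defining conditions — the $n-|\sigma|$ equations $x_i = 0$ for $i \notin \sigma$, and the vector equation $\sum_{i \in \sigma} x_i H_i = 0$ (which amounts to $n-k$ linear equations in the $x_i$) — are homogeneous linear in the coordinates of $x$, so their common solution set is a linear subspace. To see $\widehat{S}(\sigma) \subseteq C$, take any $x \in \widehat{S}(\sigma)$ and compute
\[
\sum_{i=1}^{n} x_i H_i \;=\; \sum_{i \in \sigma} x_i H_i \;+\; \sum_{i \notin \sigma} x_i H_i \;=\; 0 + 0 \;=\; 0,
\]
where the first sum vanishes by definition of $\widehat{S}(\sigma)$ and the second vanishes because $x_i = 0$ whenever $i \notin \sigma$. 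Since $H$ is a parity check matrix of $C$, this forces $x \in C$.

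Next I would verify the support containment. By definition, $\supp(\widehat{S}(\sigma))$ consists of those $i \in [n]$ for which some $x \in \widehat{S}(\sigma)$ has $x_i \neq 0$. But the defining conditions of $\widehat{S}(\sigma)$ force $x_i = 0$ for every $i \notin \sigma$ and every $x \in \widehat{S}(\sigma)$. Hence no index outside $\sigma$ can lie in the support, i.e., $\supp(\widehat{S}(\sigma)) \subseteq \sigma$.

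There is no substantive obstacle here; both claims are direct consequences of the definitions of $C$ (as the kernel of the parity check map $x \mapsto \sum x_i H_i$) and of $\widehat{S}(\sigma)$. The only care needed is to remember that "subcode" means "linear subspace contained in $C$", which is exactly what the two verifications above establish.
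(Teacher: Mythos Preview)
Your proof is correct and follows exactly the same approach as the paper: both verify $\widehat{S}(\sigma)\subseteq C$ via the parity check relation $\sum x_iH_i=0$ and note that the support containment is immediate from the condition $x_i=0$ for $i\notin\sigma$. The paper simply says each step is ``clear'' or ``obvious'', whereas you spell out the details.
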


\begin{proof}
Since 
$C=\{x=(x_1, \dots , x_n) \in \Fq^n : \sum_{i=1}^n x_iH_i =0\}$, it is clear that 
$\widehat{S}(\sigma)$ is a subcode of $C$. The inclusion $\supp(\widehat{S}(\sigma))\subseteq \sigma$ is obvious.
\end{proof}
For any $\sigma\subseteq [n]$, let $ \mathbb{F}_q^{\sigma}$ denote the set of all ordered $|\sigma|$-tuples $(x_i)_{i\in \sigma}$ of elements of $\Fq$ indexed by $\sigma$. Consider the map
\begin{equation}
\label{eq: linmap}
\phi_{\sigma}:  \mathbb{F}_q^{\sigma}\to S(\sigma) \quad \text{defined by} \quad \phi_{\sigma}(x)=\sum_{i\in \sigma} x_iH_i.
\end{equation}
Clearly $\phi_{\sigma}$ is a surjective $\Fq$-linear map. 

\begin{lemma}
	Let $\sigma\subseteq [n]$ and let $\phi_{\sigma}$ be as in \eqref{eq: linmap}. Then $\ker\phi_{\sigma}$ is isomorphic (as a $\Fq$-vector space) to $\widehat{S}(\sigma)$. Consequently, 
\begin{equation}
\label{eq: 505}
\dim S(\sigma)= \lvert\sigma\rvert-\dim  \widehat{S}(\sigma).
\end{equation}
\end{lemma}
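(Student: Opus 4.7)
The plan is to exhibit an explicit $\Fq$-linear isomorphism between $\ker\phi_\sigma \subseteq \Fq^\sigma$ and $\widehat{S}(\sigma) \subseteq \Fq^n$, and then derive the dimension formula by a direct application of rank-nullity to the surjection $\phi_\sigma$.

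For the isomorphism, I would define a map $\iota_\sigma : \Fq^\sigma \to \Fq^n$ that takes a tuple $(x_i)_{i\in\sigma}$ and pads it with zeros in the coordinates outside $\sigma$, i.e., sends it to the vector $x=(x_1,\dots,x_n) \in \Fq^n$ with $x_j=0$ for $j\notin\sigma$. It is immediate that $\iota_\sigma$ is $\Fq$-linear and injective, and that its image is precisely the coordinate subspace $\{x\in\Fq^n : x_j=0 \text{ for } j\notin\sigma\}$. Restricting $\iota_\sigma$ to $\ker\phi_\sigma$, the defining relation $\sum_{i\in\sigma}x_iH_i=0$ is preserved verbatim, so the image lies in $\widehat{S}(\sigma)$; conversely, every element of $\widehat{S}(\sigma)$ has the form $\iota_\sigma(x)$ with $(x_i)_{i\in\sigma}\in\ker\phi_\sigma$ by the very definition of $\widehat{S}(\sigma)$. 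Hence $\iota_\sigma$ restricts to a bijective $\Fq$-linear map $\ker\phi_\sigma \xrightarrow{\ \sim\ } \widehat{S}(\sigma)$.

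For the dimension formula, I would invoke the rank-nullity theorem applied to $\phi_\sigma : \Fq^\sigma \to S(\sigma)$. Since $\phi_\sigma$ is surjective onto $S(\sigma)$ by construction, we get $|\sigma| = \dim \Fq^\sigma = \dim S(\sigma) + \dim \ker\phi_\sigma$. Substituting $\dim\ker\phi_\sigma = \dim \widehat{S}(\sigma)$ from the previous step yields \eqref{eq: 505}.

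There is no real obstacle here: the proof is essentially a bookkeeping check that the zero-padding bijection between $\Fq^\sigma$ and the coordinate subspace of $\Fq^n$ indexed by $\sigma$ identifies the two kernel-type conditions. The only mild care needed is to make sure the identification respects the linear constraint defining each side, which it does tautologically.
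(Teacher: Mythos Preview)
Your proposal is correct and follows essentially the same approach as the paper: the paper also defines the zero-padding map $\psi:\Fq^\sigma\to\Fq^n$, observes that its restriction to $\ker\phi_\sigma$ gives an isomorphism onto $\widehat{S}(\sigma)$, and then invokes the Rank-Nullity theorem to obtain \eqref{eq: 505}.
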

\begin{proof}
Consider the map $\psi:  \mathbb{F}_q^{\sigma}\longrightarrow \mathbb{F}_q^n$ given by $\psi(x)=(v_1,v_2,\ldots,v_n)$, where
	$$ 
v_i = \begin{cases} 
	x_i &  \mbox{if }i\in\sigma, \\
	0 & \mbox {otherwise.}
	\end{cases} 
	$$
	It is easily seen that the restriction of $\psi$ to $\ker \phi_\sigma$  gives an isomorphism of $\ker \phi_\sigma$ onto $\widehat{S}(\sigma)$. The second assertion follows from the Rank-Nullity theorem. 
\end{proof}
For $0\le i\le k$, let 
$\mathbb{G}_i(C)$ denote the Grassmannian of all $i$-dimensional subspaces of $C$. We call $D\in \mathbb{G}_i(C)$ an  \emph{$i$-minimal subcode} of $C$ if $\supp(D)$ is minimal among the supports of all $i$-dimensional subcodes of $C$, i.e., $\supp(D^\prime)\nsubseteq\supp(D)$ for any $D^\prime\in \mathbb{G}_i(C)$ with $D^\prime\neq D$. We let 
\[
\mathcal{D}_i= \text{the set of all $i$-minimal subcodes of }C.
\]
Note that if $i=0$, then the only element of $\mathbb{G}_i(C)$ is $\{0\}$, and its support is $\emptyset$, which is clearly $i$-minimal. Moreover, $r(\emptyset) = 0=|\emptyset|$, and thus $\supp(\{0\}) \in N_0$.  
 In fact, a more general result holds. 
Recall (from $\S\;\ref{subsec:2.1}$) that $N_i$ denotes the set of all subsets of $[n]$ of nullity $i$.
\begin{proposition}\label{pr}
Suppose $0\le i \le k$ and $D\in\mathcal{D}_i$. Then $\supp(D)\in N_i$.
\end{proposition}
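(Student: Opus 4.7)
The plan is to translate the conclusion $\sigma \in N_i$, where $\sigma := \supp(D)$, into the equivalent condition $\dim \widehat{S}(\sigma) = i$, via the identity $\eta(\sigma) = |\sigma| - r(\sigma) = \dim \widehat{S}(\sigma)$ supplied by the preceding lemma (specifically, equation \eqref{eq: 505}). With that reduction, two inequalities must be established.

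For the lower bound, I would observe that $D$ is a subcode of $C$ with $\supp(D) \subseteq \sigma$, so every codeword of $D$ vanishes off $\sigma$ and lies in the kernel of the parity check, which means $D \subseteq \widehat{S}(\sigma)$. Hence $\dim \widehat{S}(\sigma) \ge \dim D = i$. The trivial case $i=0$ (where $D = \{0\}$ and $\sigma = \emptyset$) is handled separately and immediately.

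For the upper bound $\dim \widehat{S}(\sigma) \le i$, I would argue by contradiction: assume $\dim \widehat{S}(\sigma) \ge i+1$. The idea is to shrink the support by one coordinate and still find an $i$-dimensional subcode inside $\widehat{S}(\sigma)$, contradicting $i$-minimality. Concretely, pick any $j \in \sigma = \supp(D)$; by definition of support there exists $x \in D \subseteq \widehat{S}(\sigma)$ with $x_j \neq 0$, so the $j$-th coordinate functional $\widehat{S}(\sigma) \to \Fq$ is surjective, and its kernel $W_j := \{x \in \widehat{S}(\sigma) : x_j = 0\}$ has dimension $\dim \widehat{S}(\sigma) - 1 \ge i$. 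Choose any $i$-dimensional subspace $D' \subseteq W_j$; then $D' \in \mathbb{G}_i(C)$ and $\supp(D') \subseteq \sigma \setminus \{j\} \subsetneq \sigma$, contradicting the $i$-minimality of $D$. Hence $\dim \widehat{S}(\sigma) = i$.

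I do not anticipate a serious obstacle here; the only subtle step is confirming that for $j \in \supp(D)$ the coordinate functional is genuinely surjective on all of $\widehat{S}(\sigma)$ (and not merely nonzero on $D$), but this follows because $D \subseteq \widehat{S}(\sigma)$ already produces a vector with nonzero $j$-th entry. The argument uses only the lemma relating $\dim S(\sigma)$ and $\dim \widehat{S}(\sigma)$, the definition of support, and elementary linear algebra.
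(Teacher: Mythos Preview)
Your proposal is correct and follows essentially the same approach as the paper: both reduce to showing $\dim \widehat{S}(\sigma)=i$ via equation~\eqref{eq: 505}, obtain the lower bound from $D\subseteq \widehat{S}(\sigma)$, and for the upper bound pass to the hyperplane $\{x\in\widehat{S}(\sigma):x_j=0\}$ for some $j\in\sigma$ to produce an $i$-dimensional subcode with strictly smaller support. Your justification that the $j$-th coordinate functional is nonzero on $\widehat{S}(\sigma)$ (via a witness in $D$) is slightly more direct than the paper's, which first establishes $\supp(\widehat{S}(\sigma))=\sigma$, but the argument is otherwise the same.
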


\begin{proof}
	Let $\sigma:=\supp(D)$. Then for any $x\in D$,  clearly $x_i=0$  for all $i\in [n]$ with $i\notin \sigma$.  Also, since $D\subseteq C$, we see that $\sum x_iH_i=0$ for each $ x=(x_1, \dots , x_n) \in D$. It follows that 
$D\subseteq \widehat{S}(\sigma)$. In particular, 
$\dim \widehat{S}(\sigma)\geq i$. Further, by Lemma \ref{support},
$$
\sigma=\text{supp}(D)\subseteq \text{supp}(\widehat{S}(\sigma))\subseteq\sigma.
$$	
Therefore $\text{supp}(\widehat{S}(\sigma))=\sigma$.
In case $\dim (\widehat{S}(\sigma))> i$, we can choose some $j\in \sigma$ and observe that $\{x\in \widehat{S}(\sigma) : x_j=0\}$ is a subspace of dimension $\dim \widehat{S}(\sigma) -1$, and its support is contained in $\sigma\setminus\{j\}$. This can be used to 
construct an $i$-dimensional subcode $D'$ of $\widehat{S}(\sigma)$ with support a proper subset of $\sigma$. But then 
the minimality of the support of $D$ is contradicted.  It follows that $\dim \widehat{S}(\sigma)= i$, and hence $D= \widehat{S}(\sigma)$. 
Now equation \eqref{eq: 505} shows that $r(\sigma)=\lvert\sigma\rvert-i$, that is, $\sigma\in N_i$.
%
%
\end{proof}

It turns out that a partial converse of the above proposition is also true.

\begin{proposition}\label{pr2}
Suppose $0\leq i\leq k$ and  $\sigma$ is a minimal element of $N_i$ (with respect to inclusion).  Then there exists $D\in\mathcal{D}_i$
such that $\sigma=\supp(D)$. 
	
\end{proposition}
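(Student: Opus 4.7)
The natural candidate for $D$ is $\widehat{S}(\sigma)$ itself. First, since $\sigma \in N_i$ means $|\sigma| - r(\sigma) = i$, equation \eqref{eq: 505} gives $\dim \widehat{S}(\sigma) = i$, and by Lemma \ref{support}, $\widehat{S}(\sigma)$ is an $i$-dimensional subcode of $C$. So setting $D := \widehat{S}(\sigma)$, we have $D \in \mathbb{G}_i(C)$, and it remains to verify two things: (a) $\supp(D) = \sigma$, and (b) $D$ is $i$-minimal.

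Both (a) and (b) will follow from a single key claim: no proper subset $\tau \subsetneq \sigma$ is the support of any $i$-dimensional subcode of $C$. Indeed, granting this claim, the inclusion $\supp(D) \subseteq \sigma$ from Lemma \ref{support} cannot be strict (since $D$ itself would then contradict the claim), giving (a); and no $D' \in \mathbb{G}_i(C)$ can have $\supp(D') \subsetneq \sigma = \supp(D)$, giving (b).

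To prove the key claim, suppose some $D' \in \mathbb{G}_i(C)$ has $\supp(D') = \tau \subsetneq \sigma$. Then $D' \subseteq \widehat{S}(\tau)$ (any codeword supported on $\tau$ lies in $\widehat{S}(\tau)$), so $\dim \widehat{S}(\tau) \ge i$, which by \eqref{eq: 505} means $\eta(\tau) \ge i$. If $\eta(\tau) = i$, we already have a proper subset of $\sigma$ in $N_i$, contradicting the minimality of $\sigma$. If $\eta(\tau) > i$, we appeal to the elementary matroid fact that removing a single element $j$ from any subset changes the nullity by at most $1$ (since $|\cdot|$ drops by $1$ and $r(\cdot)$ drops by $0$ or $1$). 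Starting from $\tau$ and iteratively deleting elements, the nullity must pass through every value from $\eta(\tau)$ down to $0$, so some intermediate subset $\tau^* \subsetneq \tau \subsetneq \sigma$ satisfies $\eta(\tau^*) = i$, again contradicting minimality.

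The main obstacle is simply identifying the correct candidate $D$; once one takes $D = \widehat{S}(\sigma)$, the rest is a clean minimality argument relying on the dimension formula \eqref{eq: 505} and the standard ``nullity descent'' property. No delicate computation is needed beyond the single observation that nullity changes by at most one under single-element deletions.
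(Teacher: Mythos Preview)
Your proof is correct and takes essentially the same route as the paper: set $D = \widehat{S}(\sigma)$, use \eqref{eq: 505} to get $\dim D = i$, and derive a contradiction from any $i$-dimensional subcode supported on a proper subset of $\sigma$. The only difference is cosmetic: the paper passes to an $i$-minimal $D'$ and invokes Proposition~\ref{pr} to place $\supp(D')$ in $N_i$, while you bypass Proposition~\ref{pr} and produce an element of $N_i$ directly via the nullity-descent argument. One small point on your deduction of (b): the definition of $i$-minimal rules out $\supp(D') \subseteq \supp(D)$ (not merely $\subsetneq$) for $D' \neq D$, so you should also observe that $\supp(D') = \sigma$ forces $D' \subseteq \widehat{S}(\sigma) = D$ and hence $D' = D$ by dimension; this is immediate and does not affect the overall argument.
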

\begin{proof}
Since $\sigma\in N_i$, we see that $\dim S(\sigma)=r(\sigma)=\lvert\sigma\rvert - i$.
Hence, equation \eqref{eq: 505} implies that 
$\dim \widehat{S}(\sigma)=i$. Let $D:=\widehat{S}(\sigma)$ and $\sigma' :=\supp(D)$. Then $D$ is an $i$-dimensional subcode of $C$ and by Lemma \ref{support}, 
	$\sigma' \subseteq \sigma$. We claim that $D\in\mathcal{D}_i$. To see this, assume the contrary. Then 
there exists $D^\prime\in \mathbb{G}_i(C)$ with $D^\prime\neq D$ such that $\supp(D^\prime)\subsetneq\sigma'$. 
Replacing $D'$ by an $i$-dimensional subcode with smaller support, if necessary, we may assume that $D'$ is $i$-minimal. But then by Proposition \ref{pr}, $\supp(D^\prime)\in N_i$, which 
contradicts the minimality of 
$\sigma$ in $N_i$.  Thus, $D\in\mathcal{D}_i$.
\end{proof}

\begin{corollary}
\label{cor:3.5}
Suppose $0\leq i\leq k$ and $\sigma\subseteq [n]$. Then $\sigma$ is a minimal element of $N_i$ 
if and only if there exists an $i$-minimal subcode $D$ of 	$C$ with $\supp(D)=\sigma$.
\end{corollary}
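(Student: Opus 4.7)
The plan is that this corollary is essentially the conjunction of Propositions \ref{pr} and \ref{pr2}, so the proof should be short and should consist of gluing the two together cleanly.

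For the ``if'' direction, I would start with $D \in \mathcal{D}_i$ satisfying $\supp(D) = \sigma$, and invoke Proposition \ref{pr} to conclude immediately that $\sigma \in N_i$. The only remaining content is the minimality: suppose for contradiction that some $\tau \in N_i$ satisfies $\tau \subsetneq \sigma$. Shrinking $\tau$ if necessary, I may assume $\tau$ is a minimal element of $N_i$ contained in $\sigma$. Then Proposition \ref{pr2} produces some $D' \in \mathcal{D}_i$ with $\supp(D') = \tau \subsetneq \sigma = \supp(D)$. This contradicts the $i$-minimality of $D$, since $D'$ is an $i$-dimensional subcode of $C$ whose support is properly contained in $\supp(D)$.

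For the ``only if'' direction, the statement of Proposition \ref{pr2} is verbatim what is required: if $\sigma$ is a minimal element of $N_i$, then there exists $D \in \mathcal{D}_i$ with $\supp(D) = \sigma$. Nothing more is needed.

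I do not anticipate any real obstacle here, since both halves are immediate from the two propositions; the only mildly nontrivial step is the reduction in the ``if'' direction where a non-minimal $\tau \subsetneq \sigma$ in $N_i$ must be replaced by a minimal one before Proposition \ref{pr2} can be applied. This uses only the elementary fact that every nonempty finite collection of subsets of $[n]$ has an inclusion-minimal element, applied to $\{\tau \in N_i : \tau \subseteq \sigma\}$.
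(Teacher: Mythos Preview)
Your proposal is correct and takes essentially the same approach as the paper, which simply says the corollary ``follows from Propositions~\ref{pr} and~\ref{pr2}.'' You have in fact spelled out the one nontrivial point the paper leaves implicit: Proposition~\ref{pr} only asserts $\supp(D)\in N_i$, not its minimality, so your contradiction argument via a minimal $\tau\subsetneq\sigma$ and Proposition~\ref{pr2} is exactly the missing detail needed for the ``if'' direction.
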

\begin{proof}
Follows from Propositions \ref{pr} and \ref{pr2}.
\end{proof}

\begin{theorem}\label{thm: pure}
	Let $C$ be an $[n,k]_q$ code and $\,  d_1 < \dots < d_k$  its 
generalized Hamming weights. 
Then any 
$\mathbb{N}$-graded minimal free resolution of $C$ is 
pure if and only if for each $i=1, \dots , k$, all the $i$-minimal subcodes of $C$ have support weight 
	$d_i$. 
\end{theorem}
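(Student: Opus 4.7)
The plan is to reduce the purity condition to a statement about the $\mathbb{Z}^n$-graded Betti numbers, then use the correspondence between minimal elements of $N_i$ and $i$-minimal subcodes established in Corollary~\ref{cor:3.5}. The proof is essentially a chain of equivalences built on the machinery already developed.

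First, I would unpack what purity means in terms of $\mathbb{Z}^n$-graded Betti numbers. The resolution is pure precisely when, for each $i = 1, \dots, k$, there is a unique $j$ with $\beta_{i,j} \neq 0$. By \eqref{eq: betaij-betasigma}, we have $\beta_{i,j} \neq 0$ if and only if there exists $\sigma \subseteq [n]$ with $|\sigma| = j$ and $\beta_{i,\sigma} \neq 0$. Hence purity at level $i$ is equivalent to saying that all $\sigma$ with $\beta_{i,\sigma} \neq 0$ share a common cardinality.

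Next, I would apply the Johnsen--Verdure equivalence \eqref{eq: nonzerobetti}: the subsets $\sigma$ with $\beta_{i,\sigma} \neq 0$ are precisely the minimal elements of $N_i$. Combined with Corollary~\ref{cor:3.5}, which sets up a bijection between minimal elements of $N_i$ and $i$-minimal subcodes of $C$ via the support map, this shows that purity at level $i$ amounts to requiring that all $i$-minimal subcodes of $C$ have the same support weight.

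Finally, I would identify this common weight as the generalized Hamming weight $d_i$. By \eqref{eq: BettiGHW}, $d_i = \min\{j : \beta_{i,j} \neq 0\}$, and an $i$-dimensional subcode achieving the minimum support weight $d_i$ is automatically $i$-minimal (any $i$-dimensional subcode whose support is a proper subset of its own would have smaller support weight). So the minimum support weight among $i$-minimal subcodes is exactly $d_i$; if purity forces all $i$-minimal subcodes to share a common support weight, that weight must equal $d_i$. Conversely, if every $i$-minimal subcode has support weight $d_i$, then the only nonzero $\beta_{i,j}$ is at $j = d_i$, yielding purity. I do not anticipate any genuine obstacle here; the entire difficulty has been absorbed into Propositions~\ref{pr} and~\ref{pr2}, and what remains is a careful bookkeeping of the equivalences.
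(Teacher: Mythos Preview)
Your proposal is correct and follows essentially the same route as the paper: combine \eqref{eq: nonzerobetti} with Corollary~\ref{cor:3.5} to identify the $\sigma$ with $\beta_{i,\sigma}\neq 0$ as the supports of $i$-minimal subcodes, then use \eqref{eq: betaij-betasigma} and \eqref{eq: BettiGHW} to translate purity into the stated condition on support weights. The paper compresses this into two sentences, but the ingredients and logic are the same as yours.
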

\begin{proof} 
From \eqref{eq: nonzerobetti} and Corollary~\ref{cor:3.5}, we see that for $1\leq i\leq k$ and $\sigma\subseteq [n]$,
\begin{equation}\label{eq:betaisigma}
\beta_{i,\sigma}\neq 0  \iff  \sigma = \supp(D) \text{ for some } D\in \mathcal{D}_i. 
\end{equation}
Thus, the desired result follows from \eqref{eq: BettiGHW} 
and \eqref{eq: betaij-betasigma}.
\end{proof}

\begin{corollary}
	\label{Cor:beta1}
The Betti numbers at the first step of a $[n,k]_q$-code $C$  are given by 
$$
\beta_{1,j} = \left| \{ D\in \mathcal{D}_1 : \wt(D) = j\}\right| \ \text{for any nonnegative integer } j.
$$
\end{corollary}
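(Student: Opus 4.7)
The plan is to assemble ingredients already developed in this section. By equation \eqref{eq: betaij-betasigma}, we have $\beta_{1,j} = \sum_{|\sigma| = j} \beta_{1,\sigma}$, so it suffices to exhibit a bijection between the $\sigma \subseteq [n]$ of cardinality $j$ with $\beta_{1,\sigma} \neq 0$ and the set $\{D \in \mathcal{D}_1 : \wt(D) = j\}$, and to show in addition that every nonzero $\beta_{1,\sigma}$ equals~$1$.

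First I would show that $\beta_{1,\sigma} \in \{0,1\}$ for every $\sigma \subseteq [n]$. This is immediate from the description \eqref{eq:F1}: the module $F_1$ is the direct sum of the shifts $R(-\sigma_j)$, one for each squarefree minimal monomial generator $\mu_j = \prod_{i\in\sigma_j} X_i$ of $I_\Delta$. Since distinct squarefree monomials have distinct supports, no shift $R(-\sigma)$ occurs with multiplicity greater than~$1$, so $\beta_{1,\sigma}\in\{0,1\}$, with $\beta_{1,\sigma}=1$ exactly when $\sigma$ equals the support of some minimal generator of $I_\Delta$.

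Next I would invoke equation \eqref{eq:betaisigma} from the proof of Theorem \ref{thm: pure}, which gives $\beta_{1,\sigma} \neq 0 \iff \sigma = \supp(D)$ for some $D \in \mathcal{D}_1$. To convert the resulting count of supports into a count of subcodes, I would note that $D \mapsto \supp(D)$ is injective on $\mathcal{D}_1$: indeed, the proof of Proposition \ref{pr} actually establishes the stronger statement $D = \widehat{S}(\supp(D))$ for every $D \in \mathcal{D}_i$, so $D$ is recovered from its support. Summing over $|\sigma|=j$ then yields the stated formula. The argument is essentially bookkeeping and the main subtle point is remembering to verify this injectivity, which is precisely what allows the count of supports of $1$-minimal subcodes to be replaced by the count of the subcodes themselves.
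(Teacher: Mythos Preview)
Your proof is correct and follows essentially the same route as the paper, which simply cites \eqref{eq:F1} and \eqref{eq:betaisigma}. You have unpacked the details the paper leaves implicit: that \eqref{eq:F1} forces $\beta_{1,\sigma}\in\{0,1\}$, and that the injectivity of $D\mapsto\supp(D)$ on $\mathcal{D}_1$ (extracted from the proof of Proposition~\ref{pr}) is needed to pass from a count of supports to a count of subcodes.
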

\begin{proof} 
Follows from \eqref{eq:F1} and \eqref{eq:betaisigma}. 
\end{proof}

\begin{remark}
\label{rem:hsteps}
{\rm 
Let $C$ be an $[n,k]_q$ code and $h$ a positive integer $\le k$. Given a resolution of $C$, say \eqref{eq:resol}, by its \emph{left part  after $h$ steps}, we mean the exact sequence
$$
F_k\longrightarrow F_{k-1}\longrightarrow\cdots \longrightarrow F_h 
$$
which is a minimal free resolution of the cokernel of the last map $F_{h+1} \longrightarrow F_h $.  Now let $\,  d_1 < \dots < d_k$   be the 
generalized Hamming weights of $C$. It is clear that the proof of Theorem~\ref{thm: pure} also shows that the 
 left part after $h$ steps of any $\mathbb{N}$-graded minimal free resolution of $C$ is pure 
if and only if for each $i=h, \dots , k$, all the $i$-minimal subcodes of $C$ have support weight $d_i$.
%
}
\end{remark}

We now show how a characterization due to Johnsen and Verdure \cite{JV1} of MDS codes can be deduced from our characterization of purity, and moreover,  how  the minimal free resolution of an MDS code can then be readily determined using the Herzog-K{\"{u}}hl formula. 

\begin{corollary}
	\label{Cor: MDS}
Let $C$ be a nondegenerate $[n,k]_q$-code and $h$ a positive integer $\le k$.
Then $C$ is $h$-MDS if and only if the left part of its resolution after $h$ steps is linear. 
In particular, $C$ is an MDS code if and only if its resolution is linear. Moreover, if $C$ is MDS, then its Betti numbers are given by 
$$
\beta_i=\binom{n-k+i-1}{i-1}\binom{n}{k-i} \quad \text{for } i=1, \dots , k.
$$
\end{corollary}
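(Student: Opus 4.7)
My plan is to reduce the equivalence to Remark~\ref{rem:hsteps} combined with the generalized Singleton-type bound $d_i \le n-k+i$ (a direct consequence of the strict monotonicity $d_1 < \cdots < d_k$ together with $d_k \le n$), and then to read off the Betti numbers from the Herzog--K\"uhl formula \eqref{eq: herzogkuhl}.

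For the forward implication, I would suppose $C$ is $h$-MDS, i.e., $d_h = n-k+h$. Combining this with the two bounds just recalled pins down $d_i = n-k+i$ for every $i$ with $h \le i \le k$, so the sequence $d_h, d_{h+1}, \dots, d_k$ is consecutive. To invoke Remark~\ref{rem:hsteps} it then remains to verify that for each such $i$, every $i$-minimal subcode $D$ of $C$ satisfies $\wt(D) = d_i$. By Proposition~\ref{pr}, $\sigma := \supp(D) \in N_i$, so $|\sigma| - i = r(\sigma) \le n - k$, giving $\wt(D) = |\sigma| \le n-k+i = d_i$; the reverse inequality $\wt(D) \ge d_i$ is immediate from the definition of $d_i$. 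Equality holds, so Remark~\ref{rem:hsteps} supplies purity, and linearity follows from the consecutiveness already established.

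For the converse, assume the left part after $h$ steps is linear; then it is pure with $d_h, \dots, d_k$ consecutive. Since $C$ is nondegenerate, $\supp(C) = [n]$, hence $d_k = n$, and consecutiveness forces $d_h = n - k + h$, i.e., $C$ is $h$-MDS. Specializing to $h = 1$ yields the MDS characterization. Finally, if $C$ is MDS, the resolution is pure of type $(0, n-k+1, \dots, n)$ with $\beta_0 = 1$, and Herzog--K\"uhl gives
\[
\beta_i = \prod_{\substack{1 \le j \le k \\ j \ne i}} \frac{n-k+j}{|j-i|} = \frac{n!/(n-k)!}{(n-k+i)(i-1)!(k-i)!},
\]
after factoring out the missing $(n-k+i)$ from the numerator and recognizing the denominator as $(i-1)!(k-i)!$; a routine rewrite identifies this expression with $\binom{n-k+i-1}{i-1}\binom{n}{k-i}$. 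I do not anticipate a genuine obstacle: the only subtle point is deploying the Singleton-type bound symmetrically---as an upper bound on $|\sigma|$ in one direction, and as the relation $d_k = n$ forced by nondegeneracy in the other.
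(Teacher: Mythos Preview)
Your proposal is correct and follows essentially the same route as the paper's proof: both directions hinge on Remark~\ref{rem:hsteps} together with Proposition~\ref{pr} and the rank bound $r(\sigma)\le n-k$ for the forward implication, and on nondegeneracy ($d_k=n$) plus consecutiveness for the converse, with the Betti numbers then read off from Herzog--K\"uhl exactly as you do. The only cosmetic difference is that the paper presents the two directions in the opposite order and cites \eqref{eq: BettiGHW} explicitly when identifying the unique shift at level $i$ with $d_i$; you use this identification implicitly, but the argument is the same.
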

\begin{proof}
	Suppose the left part after $h$ steps of a resolution of $C$ is linear. Since $C$ is  nondegenerate, $d_k=n$, and so from the linearity 
together with equation \eqref{eq: BettiGHW},  we obtain 
$d_i= n-k+i$ for $h\le i \le k$. 
Taking $i=h$, we see that 
$C$ is $h$-MDS.  

Conversely, suppose $C$ is $h$-MDS. Then from the strict monotonicity of generalized Hamming weights \cite[Thm. 1]{We}, we see that $d_i = n-k+i$ for $h\le i \le k$. Now fix $i\in \{h, \dots , k\}$ and let $D$ be an $i$-minimal subcode of $C$. Let $\sigma:= \supp(D)$. By Proposition~\ref{pr}, $\sigma \in N_i$. Also, $n-k+i = d_i \le  |\sigma|$. Consequently, $n-k \le |\sigma| - i = r(\sigma) \le n-k$. It follows that $|\supp(D)| =d_i$. Thus, in view of Remark \ref{rem:hsteps}, we conclude that the left part after $h$ steps of any resolution of $C$ is linear.

%
%
Now assume that $C$ is MDS. Then, in view of 
\eqref{eq: herzogkuhl}, we see that for $1\le i \le k$, 
$$
\beta_i =\prod\limits_{j\neq i}\frac{d_j}{\lvert d_j-d_i\rvert}  =\prod\limits_{j\neq i}\frac{n-k+j}{\lvert j-i\rvert} = \bigg(\prod_{j=1}^{i-1} \frac{n-k+j}{i-j} \bigg) 
 \bigg(\prod_{j=i+1}^{k} \frac{n-k+j}{j-i} \bigg), 
$$
and an easy calculation shows that this  is 
equal to  $\binom{n-k+i-1}{i-1}\binom{n}{k-i}$.
%
%
\end{proof}

Let us also show how the result of Johnsen and Verdure \cite{JV2} about the minimal free resolution of constant weight codes can be deduced  from Theorem~\ref{thm: pure}. 
\begin{corollary}
	\label{cor: constwt}
	Let $C$ be an $[n,k]_q$-code in which each nonzero codeword has constant weight $d$. 
	Then the $\mathbb{N}$-graded resolution of $C$ is pure. Moreover,  the generalized Hamming weights (or the shifts) and the Betti numbers of $C$ are given by 
	$$
d_i=\frac{q^{k-1}(q^i-1)}{q^{i-1}(q-1)}  \quad \text{and} \quad 
\beta_i= {k\brack i}_q \displaystyle{q^{\frac{i(i-1)}{2}}},  \quad \text{for } i=1, \dots , k, 
$$
where ${k\brack i}_q$ denotes 
the Gaussian binomial coefficient.
\end{corollary}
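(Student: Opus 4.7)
The plan is to reduce the entire statement to a single observation: in a constant weight code, \emph{every} $i$-dimensional subcode has the same support weight. Once this is in hand, purity is immediate from Theorem~\ref{thm: pure} and the Betti numbers drop out of the Herzog-K{\"u}hl formula~\eqref{eq: herzogkuhl}.

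I would start by fixing an arbitrary $i$-dimensional subcode $D\subseteq C$ and double counting the quantity $\sum_{c\in D\setminus\{0\}}\wt(c)$. On one hand, each nonzero codeword of $D$ is a nonzero codeword of $C$, hence has weight $d$, so the sum equals $(q^i-1)d$. On the other hand, rewriting the sum as $\sum_{j=1}^n\lvert\{c\in D:c_j\ne 0\}\rvert$, the evaluation-at-$j$ functional $c\mapsto c_j$ on $D$ is nonzero precisely when $j\in\supp(D)$, and in that case its kernel has dimension $i-1$, contributing $q^{i-1}(q-1)$ to the count; for $j\notin\supp(D)$ the contribution is zero. Equating the two expressions yields $\lvert\supp(D)\rvert=d(q^i-1)/(q^{i-1}(q-1))$. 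In particular every $i$-minimal subcode has this same support weight, and since this common value is the minimum support weight over all $i$-dimensional subcodes, it equals $d_i(C)$. Theorem~\ref{thm: pure} then delivers purity; specializing at $i=k$ and using $d_k=n$ forces $d=q^{k-1}$, at which point the stated formula $d_i=q^{k-1}(q^i-1)/(q^{i-1}(q-1))$ emerges.

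For the Betti numbers I would substitute these shifts into~\eqref{eq: herzogkuhl}. A direct computation gives $d_j=q^{k-j}(q^j-1)/(q-1)$ and $\lvert d_j-d_i\rvert=q^{k-\max(i,j)}(q^{\lvert j-i\rvert}-1)/(q-1)$, so
\[
\frac{d_j}{\lvert d_j-d_i\rvert}=\begin{cases}\dfrac{q^j-1}{q^{j-i}-1}&\text{if }j>i,\\ q^{i-j}\dfrac{q^j-1}{q^{i-j}-1}&\text{if }j<i.\end{cases}
\]
The $j<i$ portion of the Herzog-K{\"u}hl product collapses neatly: the rational part $\prod_{j=1}^{i-1}(q^j-1)/(q^{i-j}-1)$ equals $1$ after reindexing $m=i-j$ in the denominator, while the leftover $q$-powers contribute $q^{\sum_{j=1}^{i-1}(i-j)}=q^{i(i-1)/2}$. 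The $j>i$ portion, $\prod_{j=i+1}^k(q^j-1)/(q^{j-i}-1)$, is by definition the Gaussian binomial ${k\brack i}_q$. Multiplying these gives $\beta_i={k\brack i}_q\, q^{i(i-1)/2}$.

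The conceptual content is essentially the elementary double-count in the second paragraph; the only place some care is required is in the bookkeeping of the Herzog-K{\"u}hl product, where one must cleanly separate the telescoping $q$-power $q^{i(i-1)/2}$ coming from the $j<i$ factor from the Gaussian binomial coming from the $j>i$ factor.
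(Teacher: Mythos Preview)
Your approach mirrors the paper's exactly: establish that every $i$-dimensional subcode has support weight $d(q^i-1)/(q^{i-1}(q-1))$, invoke Theorem~\ref{thm: pure} for purity, then feed the shifts into Herzog--K\"uhl. Where the paper simply cites \cite{LC} for the support-weight formula, your double-counting argument supplies a clean self-contained proof of the same fact, and your Herzog--K\"uhl bookkeeping (separating the $q^{i(i-1)/2}$ from the $j<i$ factor and the Gaussian binomial from the $j>i$ factor) is essentially identical to the paper's computation.

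One step does not hold as written: ``specializing at $i=k$ and using $d_k=n$ forces $d=q^{k-1}$.'' Even granting nondegeneracy so that $d_k=n$, your formula yields only the relation $n = d(q^k-1)/(q^{k-1}(q-1))$, which does not pin down $d$; a replicated simplex code, for instance, is a constant weight $[n,k]_q$-code with $d=2q^{k-1}$. In fact the paper's own proof works throughout with $d_j = d(q^j-1)/(q^{j-1}(q-1))$ and never derives $d=q^{k-1}$; the appearance of $q^{k-1}$ in the corollary's displayed formula for $d_i$ seems to be a slip in the statement rather than something either proof actually establishes. This does not affect the Betti-number computation, since any common scalar in the $d_j$ cancels in the Herzog--K\"uhl ratios.
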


\begin{proof}
It is well-known (see, e.g., \cite[Thm. 1]{LC}) that every $j$-dimensional subcode of the constant weight code $C$ has support weight $d_j$, where 
$$
	d_j =\frac{d(q^j-1)}{q^{j-1}(q-1)}  \quad \text{for } j=1, \dots , k. 
$$
Hence, by Theorem $\ref{thm: pure}$,  $C$ has a pure resolution. Evidently, the numbers $d_i$ defined above are the generalized Hamming weights of $C$. Moreover, for $i,j=1, \dots , k$, 
$$
d_i - d_j = \frac{d(q^{i-j}-1)}{q^{i-1}(q-1)},  \quad \text{if $j<i$, whereas} \quad  d_j - d_i = \frac{d(q^{j-i}-1)}{q^{j-1}(q-1)},  \quad \text{if $j> i$}. 
$$
Hence, the Herzog-K{\"{u}}hl formula $\eqref{eq: herzogkuhl}$ implies that for $i=1, \dots , k$, 
$$
\beta_i =\prod\limits_{j\neq i}\frac{d_j}{\lvert d_j-d_i\rvert}  = \bigg(\prod_{j=1}^{i-1}\frac{q^{i-j} (q^j-1)}{q^{i-j}-1}\bigg) 
 \bigg(\prod_{j=i+1}^{k} \frac{q^{j}-1}{q^{j-i}-1} \bigg) =  \displaystyle{q^{\frac{i(i-1)}{2}}} {k\brack i}_q,
$$
where the last equality follows by noting that for $i=1, \dots , k$, 
$$
{k\brack i}_q = {k\brack{k- i}}_q = \frac{(q^k-1) (q^{k-1}-1) \cdots (q^{i+1}-1)}{(q^{k-i}-1) (q^{k-i-1}-1) \cdots (q-1)}
= \prod_{j=i+1}^k \frac{q^{j}-1}{q^{j-i}-1}.
$$
This proves the desired result. 
\end{proof}

\section{Reed-Muller Codes}
In this section we consider generalized Reed-Muller codes and prove that the resolution of the first order Reed-Muller code is pure, whereas for other Reed-Muller codes, it is non-pure. 
Let us begin by recalling the construction of  (generalized) Reed-Muller codes. 
Fix 
integers $r,m$ such that  $m\ge 1$ and $0\le r \leq m (q-1)$. 
Define 
$$
V_q(r, m)=\{f\in \Fq[X_1,\ldots, X_m] : \deg f\leq r \text{ and }  \deg_{X_i} f< q\text{ for }i=1, \dots , m\}.
$$
Fix an ordering $P_1, \ldots, P_{q^m}$ of the elements of $\Fq^m$. Consider the evaluation map
$$
\Ev: V_q(r, m)\to \Fq^{q^m} \quad \text{defined by} \quad f \mapsto c_f:=\left(f(P_1),\ldots, f(P_{q^m})\right).
$$
The image of $\Ev$ 
is called the \emph{generalized Reed-Muller code of order $r$},  and we 
denote it by $\mathcal{RM}_q(r, m)$. It is well-known that 
$\mathcal{RM}_q(r, m)$ is an 
$[n, k, d]_q$-code, with
\begin{equation}\label{eq:RMnkd}
n= q^m,\quad k=\sum_{i=0}^m (-1)^i \binom{m}{i} \binom{m+r-iq}{m}, \quad \text{and}\quad d= (q-s)q^{m-t-1},
\end{equation}
where $t,s$ are unique integers satisfying $r= t(q-1) +s$ and $0\leq s\leq q-2$. 
Further, for any $ \omega_0, \omega_1, \dots , \omega_t\in \Fq$ with $\omega_0\ne 0$ and any distinct 
$\omega_1', \dots, \omega_s'\in \Fq$, the polynomial 
\[
f(X_1\ldots, X_m)= \omega_0\prod_{i=1}^{t}(1-(X_i-\omega_i)^{q-1})\prod_{j=1}^{s}(X_{t+1}-\omega^\prime_j)
\]
is in $V_q(r, m)$ and $\Ev(f)$ is a minimum weight codeword of $\mathcal{RM}_q(r, m)$. Moreover, up to a (nonhomogeneous) linear substitution in $X_1, \dots , X_m$, every minimum weight codeword of $\mathcal{RM}_q(r, m)$ is of this form; see, e.g., Theorems 2.6.2 and 2.6.3 of \cite{DGM}. 
It is also well-known (see, e.g., \cite[\S 5.4]{AssKey}) that the dual of $\mathcal{RM}_q(r, m)$ is given by\footnote{Strictly speaking, for the formula \eqref{eq:RMdual} to be valid, we should  note that the definition of $\mathcal{RM}_q(r, m)$ is meaningful also when $r=-1$ in which case it is the zero code of length $q^m$.}
\begin{equation}\label{eq:RMdual}
\mathcal{RM}_q(r, m)^\perp = \mathcal{RM}_q(r^\perp, m) \quad \text{where} \quad r^\perp + r +1 = m(q-1).
\end{equation}
In particular, if $r=m(q-1)-1$, then $\mathcal{RM}_q(r, m)$ is a MDS code (being the dual of $\mathcal{RM}_q(0, m)$, which is the $1$-dimensional code of length $q^m$  generated by the all-$1$ vector). Also if $r=m(q-1)$, then $\mathcal{RM}_q(r, m)$ is a MDS code, being the full space $\Fq^m$. Finally, if $m=1$, then $\mathcal{RM}_q(r, m)$ is a Reed-Solomon code, and in particular, a MDS code. Thus, in all these ``trivial cases", $\mathcal{RM}_q(r, m)$  has a pure, and in fact, linear, resolution. The following result deals with the first nontrivial case of $r=1$. 
%
%

\begin{theorem}
	\label{thm: firstReedMuller}
	The $\mathbb{N}$-graded minimal free resolution of the first order Reed-Muller code  $\mathcal{RM}_q(1,m)$  is pure and is given by
$$ 
R(-d_{m+1})^{\beta_{m+1}} {\longrightarrow}R(-d_m)^{\beta_m}\longrightarrow \cdots \longrightarrow R(-d_1)^{\beta_1} \longrightarrow R
$$
where $d_i = q^m-\lfloor q^{m-i}\rfloor$ for $1\le i\le m+1$, and 
$$ 
\beta_i=\begin{cases} 
\displaystyle
q^{\binom{i+1}{2}} 
 \prod_{j=1}^{m-i}\frac{q^{m+1-j} -1}{q^{m+1-i-j}-1} & \text{ if }1\le  i\leq m, \\
\displaystyle \prod_{j=1}^{m}(q^j-1) & \mbox { if }i=m+1. \end{cases} 
$$ 
\end{theorem}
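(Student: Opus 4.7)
My plan is to apply the purity criterion of Theorem~\ref{thm: pure} by analysing the support weights of subcodes of $\mathcal{RM}_q(1,m)$ directly via the evaluation description, and then to feed the generalized Hamming weights into the Herzog--K\"uhl formula to read off the Betti numbers.

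The first and central step is a classification. Since the evaluation map $\Ev : V_q(1,m) \to \mathcal{RM}_q(1,m)$ is an isomorphism, every $i$-dimensional subcode arises as $\Ev(W)$ for a unique $i$-dimensional subspace $W \subseteq V_q(1,m)$, and $\wt(\Ev(W)) = q^m - |Z(W)|$, where $Z(W) := \bigcap_{f \in W} \{P \in \Fq^m : f(P)=0\}$. I will distinguish two cases according to whether $W$ contains a nonzero constant. If $W$ contains a nonzero constant, then $Z(W) = \emptyset$ and $\wt(\Ev(W)) = q^m$. Otherwise, fixing a basis $f_1, \dots, f_i$ of $W$ and writing $f_j = a_j + \ell_j$ with $\ell_j$ a nonzero linear form, I will show that $\ell_1, \dots, \ell_i$ are linearly independent (so $i \le m$), and that the affine system $\ell_j(P) = -a_j$ ($j = 1, \dots , i$) is consistent with solution set of size $q^{m-i}$; hence $\wt(\Ev(W)) = q^m - q^{m-i}$. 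In particular the minimum support weight of an $i$-dimensional subcode is $d_i = q^m - q^{m-i}$ for $1 \le i \le m$, and $d_{m+1} = q^m$, recovering the generalized Hamming weights.

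Next I verify purity. Fix $i$ with $1 \le i \le m$ and let $D$ be an $i$-minimal subcode of $\mathcal{RM}_q(1,m)$. If $D$ corresponds to a subspace $W$ containing a nonzero constant, so that $\supp(D) = [n]$, then any $i$-dimensional subspace $W'$ avoiding constants (e.g.\ $\langle X_{j_1}, \dots, X_{j_i}\rangle$) gives a subcode $D' \neq D$ with $\supp(D') \subsetneq [n] = \supp(D)$, contradicting $i$-minimality of $D$. Hence $W$ contains no nonzero constants, and the classification forces $\wt(D) = d_i$. For $i = m+1$ there is only $D = \mathcal{RM}_q(1,m)$, with $\wt(D) = q^m = d_{m+1}$. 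By Theorem~\ref{thm: pure} the resolution is pure with shifts $d_1 < d_2 < \cdots < d_{m+1}$ as stated.

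Finally, I will compute the Betti numbers via the Herzog--K\"uhl formula $\eqref{eq: herzogkuhl}$. Using $d_j = q^m - q^{m-j}$ for $j \le m$ and $d_{m+1}=q^m$, one has $d_j = q^{m-j}(q^j-1)$ and
\[
|d_j - d_i| = q^{m - \max(i,j)}\bigl(q^{|i-j|} - 1\bigr) \quad \text{for } 1 \le i,j \le m,
\]
while $|d_{m+1}-d_i|=q^{m-i}$ for $i \le m$. Splitting the product into $j<i$, $j>i$ with $j \le m$, and $j = m+1$, the $j<i$ piece collapses to the pure power $q^{i(i-1)/2}$ after a telescoping reindexation $j \mapsto i-j$ in the denominator, the $j = m+1$ term contributes $q^i$ (yielding $q^{i(i-1)/2 + i} = q^{\binom{i+1}{2}}$), and the $j>i$ piece, after $j \mapsto j-i$, becomes $\prod_{j=1}^{m-i}(q^{m+1-j}-1)/(q^{m+1-i-j}-1)$. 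The case $i = m+1$ reduces instantly to $\prod_{j=1}^{m}(q^j - 1)$ since only the denominators $q^{m-j}$ and numerators $q^m - q^{m-j}$ appear. The routine part is the telescoping bookkeeping; the only conceptual obstacle is the classification of support weights of subspaces of $V_q(1,m)$ in step one, which the affine-linear-algebra argument above handles cleanly.
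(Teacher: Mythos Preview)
Your proof is correct and follows essentially the same strategy as the paper's: classify the support weights of $i$-dimensional subcodes of $\mathcal{RM}_q(1,m)$ via the zero sets of systems of affine-linear equations, deduce purity from Theorem~\ref{thm: pure}, and then read off the Betti numbers via the Herzog--K\"uhl formula. Your case split (whether $W$ contains a nonzero constant) is exactly the paper's consistent/inconsistent dichotomy phrased differently, and you are merely a bit more explicit in justifying why a full-support $i$-dimensional subcode cannot be $i$-minimal and in the telescoping bookkeeping for $\beta_i$.
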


\begin{proof}
First, note that $\dim \mathcal{RM}_q(1,m) = m+1$. Let $i$ be a positive integer $\le m+1$. If $i=m+1$, then the only $i$-dimensional subcode of $\mathcal{RM}_q(1,m)$ is $\mathcal{RM}_q(1,m)$ itself, and this has support weight $q^m$. Now suppose $1\leq i\leq m$. 
	Let $D$ be a subcode of  $\mathcal{RM}_q(1,m)$ of dimension $i$. Then the support weight of $D$ is clearly 
$$
q^m-|Z(f_1,\ldots,f_i)|, 
$$
where $f_1, \dots, f_i \in V_q(1, m)$ are linearly independent polynomials whose images under $\Ev$ form a basis of $D$, and where $Z(f_1,\ldots,f_i)$ denotes the set of common zeros in $\Fq^m$ of $f_1, \dots , f_i$. Now $f_1 = \cdots = f_i=0$ is a system of $i$ linearly independent (not necessarily homogeneous) linear equations in $m$ variables, and thus it has either no solutions (when the system is inconsistent) or exactly $q^{m-i}$ solutions (when the system is consistent). Accordingly, the support weight of $D$ is either $q^m$ or $q^m - q^{m-i}$. Moreover, if the former holds, then $\supp(D) = \{1, \dots , q^m\}$, and so $D$ cannot be an $i$-minimal subcode of $\mathcal{RM}_q(1,m)$. It follows that all $i$-minimal subcodes of $\mathcal{RM}_q(1,m)$ have the same support weight $d_i = q^m-\lfloor q^{m-i}\rfloor$ for $1\le i\le m+1$. Thus, by Theorem $\ref{thm: pure}$, $\mathcal{RM}_q(1,m)$ has a pure resolution. Consequently, the 
Betti numbers of $\mathcal{RM}_q(1,m)$ can be 
determined using the 
Herzog-K{\"{u}}hl formula~\eqref{eq: herzogkuhl} as follows. 
$$
\beta_{m+1} =\prod\limits_{j=1}^m\frac{d_j}{d_{m+1}-d_j} = 
		\prod\limits_{j=1}^m\frac{q^m - q^{m-j}}{q^{m-j}} = \prod_{j=1}^{m}(q^j-1),
$$
whereas for $1\le i \le m$, 
\begin{eqnarray*}
		\beta_i &=&\frac{d_{m+1}}{d_{m+1}-d_i}\prod\limits_{m+1>j>i}\frac{d_j}{d_j-d_i}\prod\limits_{j<i}\frac{d_j}{d_j-d_i}\\
		&=&q^i\prod\limits_{j={i+1}}^m\frac{q^{m-j}(q^{j}-1)}{q^{m-j}(q^{j-i}-1)}\prod\limits_{j=1}^{i-1}\frac{q^{m-j}(q^j-1)}{q^{m-i}(q^{i-j}-1)}\\
		&=&  q^{\frac{i(i+1)}{2}} \prod\limits_{j={i+1}}^m \frac{(q^{j}-1)}{(q^{j-i}-1)} 
= q^{\binom{i+1}{2}}  \prod\limits_{j=1}^{m-i}\frac{(q^{m+1-j}-1)}{(q^{m+1-i-j}-1)}.
\end{eqnarray*}
This proves the theorem. 
\end{proof}

\begin{remark}
Observe that 
the pure resolution of $\mathcal{RM}_q(1,m)$  in Theorem \ref{thm: firstReedMuller} is linear only when either $m=1$ or $m=2=q$. As noted earlier, $\mathcal{RM}_q(1,m)$ is a MDS code in this case. 
\end{remark}

Next, we shall show that the minimal free $\mathbb{N}$-resolutions of many generalized Reed-Muller codes of order higher than one are not pure. It will be convenient to consider various cases separately. As usual, we shall say that an element $c$ of a linear code $C$ is a \emph{minimal codeword} if either $c=0$, or if $c\ne 0$ and the support of the $1$-dimensional subspace $\langle c \rangle$ of $C$ spanned by $c$ is minimal among the supports of all  $1$-dimensional subcodes of $C$. Evidently, a codeword of minimum weight is minimal, but the converse may not be true. 

\subsection{Binary Case}\label{subsec:binary} 
In this subsection we consider the binary case, i.e., when $q=2$. 
We will use the following simple, but useful, observation. It is stated, for instance, in Ashikhmin and Barg \cite[Lemma 2.1]{AB2}. The proof is obvious and is omitted. 

\begin{lemma}
\label{lem:AB2} 
Let $C$ be a binary linear code and let $d= d(C)$ be its minimum distance. If $c\in C$ is not a minimal weight codeword, then $c = c_1+c_2$ for some nonzero $c_1, c_2\in C$ such that 
$\supp (\langle c_1 \rangle)$ and $\supp (\langle c_2 \rangle)$ are disjoint and 
$\supp (\langle c_i \rangle) \subsetneq \supp(\langle c \rangle)$ for $i=1, 2$. In particular, if $c\in C$ has $\wt(c) < 2d$, then $c$ is a minimal codeword of $C$.
\end{lemma}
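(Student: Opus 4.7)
The plan is to exploit the fact that over $\mathbb{F}_2$ the support of a $1$-dimensional subcode $\langle c \rangle$ coincides with $\supp(c)$, and addition amounts to symmetric difference of supports. So the hypothesis that $c$ is not a minimal codeword translates directly into the existence of a nonzero $c' \in C$ with $\supp(c') \subsetneq \supp(c)$.

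First I would take such a $c'$ and set $c_1 := c'$ and $c_2 := c + c'$. A short verification coordinate-by-coordinate shows that, because $\supp(c') \subseteq \supp(c)$, one has $\supp(c_2) = \supp(c) \setminus \supp(c')$. In particular $\supp(c_1)$ and $\supp(c_2)$ are disjoint, their union is $\supp(c)$, and $c_2$ is nonzero because $\supp(c') \subsetneq \supp(c)$. Both inclusions $\supp(c_i) \subsetneq \supp(c)$ then follow: the first by the defining property of $c'$, the second from $c_1 \neq 0$ and disjointness. Since the code is binary, $\supp(\langle c_i\rangle) = \supp(c_i)$, so the claim about the $1$-dimensional subcodes is immediate.

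For the second assertion, I would argue by contraposition. Suppose $c \in C$ is not minimal. Apply the first part to decompose $c = c_1 + c_2$ with $\supp(c_1)$ and $\supp(c_2)$ disjoint and both nonempty. Disjointness gives $\wt(c) = \wt(c_1) + \wt(c_2)$, and since each $c_i$ is a nonzero codeword, $\wt(c_i) \ge d$. Therefore $\wt(c) \ge 2d$, contradicting $\wt(c) < 2d$.

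There is really no main obstacle here: the whole point of working over $\mathbb{F}_2$ is that support behaves additively, so the decomposition is forced once a strictly smaller support is available, and the weight bound drops out of disjointness. The only thing to be mildly careful about is keeping track of the distinction between $\supp(c)$ and $\supp(\langle c \rangle)$, but in characteristic $2$ this collapse is exactly what makes the lemma work.
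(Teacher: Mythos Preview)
Your argument is correct and is exactly the standard one: pick $c'$ with strictly smaller support, set $c_1=c'$, $c_2=c+c'$, and use that binary addition on nested supports gives the set difference; the weight inequality then follows by contraposition. The paper itself omits the proof as obvious, so there is nothing to compare against beyond noting that you have written out precisely the intended elementary verification.
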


The following result shows that all ``nontrivial" binary  Reed-Muller codes of order greater than $1$
have a non-pure resolution. 

\begin{proposition}
	\label{thm: binaryRM}
Assume that $m\ge 4$ and $1< r \le m-2$. Then  any minimal free $\mathbb{N}$-resolution of the binary  Reed-Muller code $\mathcal{RM}_2(r,m)$  is not pure.
\end{proposition}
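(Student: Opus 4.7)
The plan is to show that purity fails already at step $i=1$ of the resolution. By Theorem~\ref{thm: pure} and Corollary~\ref{Cor:beta1}, it suffices to exhibit a minimal codeword of $\mathcal{RM}_2(r,m)$ whose weight is strictly greater than the minimum distance $d_1 = 2^{m-r}$. Indeed, minimum weight codewords are automatically minimal (their supports cannot properly contain that of any nonzero codeword) and contribute $\beta_{1,d_1}\neq 0$, so any extra minimal codeword of a different weight would force a second nonzero entry $\beta_{1,j}\neq 0$ with $j\neq d_1$ in the first column of the Betti table, precluding purity.

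The idea is to manufacture a codeword $c \in \mathcal{RM}_2(r,m)$ of weight strictly between $d_1$ and $2d_1$, so that Lemma~\ref{lem:AB2} automatically forces $c$ to be minimal. Concretely, I would choose two $(m-r)$-dimensional affine subspaces $V_1, V_2 \subseteq \mathbb{F}_2^m$ whose intersection has dimension $m-r-2$. Under the hypotheses $r\geq 2$ and $m\geq r+2$, this can be done explicitly, for instance by setting
$$
V_1 = \{x \in \mathbb{F}_2^m : x_1 = \cdots = x_r = 0\}, \qquad V_2 = \{x \in \mathbb{F}_2^m : x_1 = \cdots = x_{r-2} = x_{r+1} = x_{r+2} = 0\}.
$$
The indicator function of each $V_j$ is a product of $r$ affine polynomials in $\mathbb{F}_2[X_1,\dots,X_m]$ and hence has total degree $r$, so its evaluation on $\mathbb{F}_2^m$ lies in $\mathcal{RM}_2(r,m)$. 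Working in characteristic two, $c := \Ev(\mathbf{1}_{V_1} + \mathbf{1}_{V_2})$ is then a codeword of $\mathcal{RM}_2(r,m)$ whose support (after identifying $\mathbb{F}_2^m$ with $[2^m]$ via the chosen ordering of points) equals the symmetric difference $V_1 \triangle V_2$.

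A direct count using $|V_1| = |V_2| = 2^{m-r}$ and $|V_1 \cap V_2| = 2^{m-r-2}$ yields
$$
\wt(c) = |V_1 \triangle V_2| = 2\cdot 2^{m-r} - 2\cdot 2^{m-r-2} = 3 \cdot 2^{m-r-1},
$$
and this lies strictly between $d_1 = 2^{m-r}$ and $2d_1 = 2^{m-r+1}$. Lemma~\ref{lem:AB2} therefore guarantees that $c$ is a minimal codeword, so $\langle c\rangle \in \mathcal{D}_1$ has support weight $3\cdot 2^{m-r-1}\neq d_1$. Combined with the existence of minimum weight codewords, Corollary~\ref{Cor:beta1} gives two distinct values of $j$ for which $\beta_{1,j}\neq 0$, contradicting purity of any $\mathbb{N}$-graded minimal free resolution. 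The only genuine obstacle is verifying that a pair $V_1,V_2$ with the required intersection dimension fits inside the ambient $\mathbb{F}_2^m$ using only variables available for degree-$r$ factors; the numerical constraints $r\geq 2$ and $m\geq r+2$ built into the hypothesis are precisely what makes the above explicit choice legitimate, and they are also exactly the reason the statement excludes the trivial extremes $r\le 1$ and $r\geq m-1$.
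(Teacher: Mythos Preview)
Your proof is correct and follows essentially the same approach as the paper: construct a codeword of weight $\tfrac{3}{2}d_1 = 3\cdot 2^{m-r-1}$ and invoke Lemma~\ref{lem:AB2} to conclude it is minimal but not of minimum weight, violating purity at the first step via Theorem~\ref{thm: pure}. In fact your codeword $\mathbf{1}_{V_1}+\mathbf{1}_{V_2}$ is, up to the affine substitution $X_i\mapsto 1+X_i$, exactly the paper's polynomial $Q=X_1\cdots X_{r-2}(X_{r-1}X_r+X_{r+1}X_{r+2})$, since $X_1\cdots X_r$ and $X_1\cdots X_{r-2}X_{r+1}X_{r+2}$ are the indicators of the shifted flats.
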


\begin{proof}
The minimum distance of $\mathcal{RM}_2(r,m)$ is $d:= 2^{m-r}$ and if we let 
\[
	Q(X_1,\ldots, X_m)= X_1X_2\cdots X_{r-2}(X_{r-1}X_{r} +X_{r+1}X_{r+2}),
	\]
then clearly, $Q\in V_2(r,m)$. Moreover, the corresponding codeword $c_Q = \Ev(Q)$ has weight $6\times 2^{m-r-2} = 3d/2$. Indeed, $Q(a_1, \dots, a_m) \ne 0$ for $(a_1, \dots, a_m)\in \FF_2^m$ precisely when $a_1= \dots = a_{r-2}=1$, $(a_{r-1}, a_r, a_{r+1}, a_{r+2})$ is one among  $(0,1,1,1)$, $(1,0,1,1)$, $(0,0,1,1)$, $(1,1,0,1)$, $(1,1,1,0)$, and $(1,1,0,0)$, while $a_{r+3}, \dots , a_m \in\FF_2$ are arbitrary. Hence, by Lemma \ref{lem:AB2}, $c_Q$ is a minimal codeword, but it is clearly not of minimum weight. Thus, the desired result follows from Theorem \ref{thm: pure}.
\end{proof}

\begin{remark}
As Alexander Barg has pointed out to one of us, the last assertion in Lemma \ref{lem:AB2} can be extended to the $q$-ary case to show that codewords of weight less than $dq/(q-1)$ are minimal in $C$, where $C$ is a $q$-ary linear code with minimum distance $d$. However, for $q>2$, this is often a restrictive hypothesis, and in the next subsections, we will deal with $q$-ary Reed-Muller codes using a different strategy. 
\end{remark}

\subsection{The Case of $t=0$}\label{subsec:tis0}
Let $t,s$ be as in \eqref{eq:RMnkd} so that $r = t(q-1) + s$ and $0\le s < q-1$. We will consider the case  of 
Reed-Muller codes of order $r >1$ for which $t=0$ (so that $r=s$). Note that such codes are necessarily 
non-binary, and in fact, $q\ge 4$. We shall also exclude the case when $m=1$, since $\mathcal{RM}_q(r,1)$ is a Reed-Solomon (and hence MDS) code for $1\le r \le (q-1)$. 

\begin{proposition}
	\label{thm:RMtiszero}
Assume that  $m\ge 2$ and $1< r < q-1$. Then  any minimal free $\mathbb{N}$-resolution of the  Reed-Muller code $\mathcal{RM}_q(r,m)$  is not pure.
\end{proposition}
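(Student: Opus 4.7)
The plan is to invoke Theorem~\ref{thm: pure} at the level of $i=1$, by exhibiting a $1$-minimal subcode of $\mathcal{RM}_q(r,m)$ whose support weight strictly exceeds the minimum distance $d = (q-r)q^{m-1}$ (which is $d_1$, since $t=0$). Once such a codeword is produced, purity fails already at the first step of the resolution, giving the desired conclusion.

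The candidate polynomial I would take is
\[
f(X_1,\ldots,X_m)=(X_1-\alpha)\prod_{j=1}^{r-1}(X_2-\beta_j),
\]
where $\alpha\in\Fq$ is arbitrary and $\beta_1,\ldots,\beta_{r-1}$ are pairwise distinct elements of $\Fq$; the hypothesis $r<q-1$ guarantees that enough distinct $\beta_j$ exist and that $\deg_{X_i}f<q$, so $f\in V_q(r,m)$. A quick inclusion-exclusion on the union $\{X_1=\alpha\}\cup\bigcup_j\{X_2=\beta_j\}$ computes
\[
\wt(c_f)=q^m-r\,q^{m-1}+(r-1)q^{m-2}=(q-1)(q+1-r)q^{m-2}=d+(r-1)q^{m-2},
\]
which is strictly greater than $d$ since $r>1$.

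The main step, and the only place where some care is needed, is showing that $\langle c_f\rangle$ is $1$-minimal. Suppose $g\in V_q(r,m)$ satisfies $\supp(c_g)\subseteq\supp(c_f)$, i.e., $g$ vanishes at every zero of $f$ in $\Fq^m$. Since $g|_{X_1=\alpha}$ is a polynomial in $X_2,\ldots,X_m$ of partial degrees $<q$ that vanishes on $\Fq^{m-1}$, it must be the zero polynomial, so $(X_1-\alpha)\mid g$. Writing $g=(X_1-\alpha)h$, the polynomial $h$ has degree $\le r-1<q-1$ in $X_1$; for each fixed $j$ and each choice of $a_3,\ldots,a_m$, the univariate polynomial $h(X_1,\beta_j,a_3,\ldots,a_m)$ vanishes at the $q-1$ values $X_1\ne\alpha$, hence identically (as $q-1>r-1$). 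Thus $h$ vanishes on $\{X_2=\beta_j\}$, and by the same polynomial-vanishing argument $(X_2-\beta_j)\mid h$ for every $j$. Since the $\beta_j$'s are distinct, $\prod_j(X_2-\beta_j)$ divides $h$, so $f\mid g$; and as $\deg g\le r=\deg f$, we conclude $g=cf$ for some $c\in\Fq$. Therefore no $1$-dimensional subcode has support strictly contained in $\supp(c_f)$, i.e., $\langle c_f\rangle$ is $1$-minimal.

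Having produced a $1$-minimal subcode of support weight $d+(r-1)q^{m-2}>d_1$, Theorem~\ref{thm: pure} immediately yields that no $\mathbb{N}$-graded minimal free resolution of $\mathcal{RM}_q(r,m)$ can be pure. The only technically delicate point is the divisibility step, which relies crucially on the degree bound $r<q-1$ to ensure that enough zeros force vanishing of the relevant one-variable restrictions; the hypothesis $m\ge 2$ is what lets us use two distinct variables in the construction.
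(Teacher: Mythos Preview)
Your proof is correct and uses the same test polynomial as the paper (with the roles of $X_1$ and $X_2$ interchanged), but your route to minimality is genuinely different. The paper does \emph{not} prove that $c_Q$ itself is $1$-minimal; instead it argues that if $c_Q$ were not minimal, some minimal codeword $c_F$ would have $\supp(c_F)\subsetneq\supp(c_Q)$, and then invokes the Delsarte--Goethals--MacWilliams classification of minimum-weight codewords of $\mathcal{RM}_q(r,m)$ to force $F=\prod_{j}(L-\omega'_j)$ for a single linear form $L$, from which a contradiction is derived. Your argument bypasses that structure theorem entirely: by a direct divisibility argument (using only that reduced polynomials vanishing on $\Fq^{m-1}$ are zero and that a univariate polynomial of degree $\le r-1<q-1$ with $q-1$ roots vanishes) you show that any $g\in V_q(r,m)$ with $\supp(c_g)\subseteq\supp(c_f)$ satisfies $f\mid g$, hence $g\in\Fq\cdot f$. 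This is in fact a stronger conclusion than the paper obtains, and it is more self-contained, since it does not depend on the nontrivial external input of the minimum-weight classification. The trade-off is that the paper's approach, once one accepts that classification, scales uniformly to the other cases treated in \S\ref{subsec:tspositive}--\S\ref{subsec:tismminusone}, whereas extending your divisibility argument to those cases would require additional work.
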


\begin{proof}
Choose distinct elements $\omega_1,\ldots, \omega_{r-1}\in\Fq$ and an arbitrary $\omega\in\Fq$. Define
$$
Q(X_1,\ldots, X_m)= (X_2-\omega) \prod\limits_{i=1}^{r-1}(X_1-\omega_i). 
$$
Clearly, $Q\in V_q(r,m)$ and the  corresponding codeword $c_Q = \Ev(Q)$ has weight $(q-r+1)(q-1)q^{m-2}$. On the other hand, by \eqref{eq:RMnkd}, the minimum distance of $\mathcal{RM}_q(r,m)$ is $(q-r)q^{m-1}$.   Observe that 
$$
(q-r+1)(q-1)q^{m-2} - (q-r)q^{m-1} = (r-1)q^{m-2} > 0 \quad \text{since } r>1. 
$$
It follows that $c_Q$ is not a minimum weight codeword. If $c_Q$ is a minimal codeword, then Theorem \ref{thm: pure} implies the desired result. Now suppose $c_Q$ is not a minimal codeword of $\mathcal{RM}_q(r,m)$. Then we can find $F\in V_q(r,m)$ such that $c_F$ is a minimal codeword of $\mathcal{RM}_q(r,m)$ and $\supp(c_F) \subset \supp(c_Q)$. Again, if $c_F$ is not a minimal codeword of $\mathcal{RM}_q(r,m)$, then we are done. Otherwise, by the characterization of  minimum weight  codewords of $\mathcal{RM}_q(r,m)$, we must have
$$
F(X_1,\ldots, X_m) = 
\prod_{j=1}^{r}(L-\omega^\prime_j)
$$
for some 
distinct elements $\omega^\prime_1, \dots , \omega^\prime_r\in \Fq$ and some nonzero linear polynomial  $L$ in $\Fq[X_1, \dots, X_m]$ that we can assume to be homogeneous (by adjusting $\omega^\prime_j$, if necessary). Write $L = a_1X_1+ \cdots +a_mX_m$. Since $\supp(c_F) \subset \supp(c_Q)$, it follows that $L$ vanishes whenever we substitute $X_1=\omega_i$ for some $i\in \{1, \dots , r\}$ or we substitute $X_2=\omega$. In particular, $a_1\omega_1 + a_2X_2 + \cdots +a_mX_m = \omega^\prime_j$ for some $j\in \{1, \dots , r\}$. Comparing the degree in each of the variables $X_2, \dots , X_m$, we obtain $a_2 = \dots = a_m=0$ so that $L=a_1X_1$. But then $L$ does not vanish when we substitute $X_2=\omega$, and we obtain a contradiction. This proves the proposition. 
\end{proof}

\subsection{The case of $0<t<m-1$ and $1<s < q-1$}\label{subsec:tspositive}  
The arguments here will be similar to those in the previous subsection, except that we have to deal with an additional factor of degree $t(q-1)$. Note that $1<s < q-1$ implies that $q\ge 4$. 

\begin{proposition}
	\label{thm:RMtpositive}
Assume that  $1< r < m(q-1)$ and moreover, $r = t(q-1) + s$ with $0<t<m-1$ and $1<s < q-1$. Then  any minimal free $\mathbb{N}$-resolution of the   Reed-Muller code $\mathcal{RM}_q(r,m)$  is not pure.
\end{proposition}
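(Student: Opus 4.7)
The plan is to imitate the strategy used for Proposition \ref{thm:RMtiszero}, now with the characteristic-function factors $\prod_{i=1}^{t}\bigl(1-(X_i-\omega_i)^{q-1}\bigr)$ that must appear when $t>0$. I would exhibit a codeword $c_Q$ whose weight strictly exceeds the minimum distance $d$, and then show that no minimum weight codeword can have support contained in $\supp(c_Q)$. Assuming purity, Theorem \ref{thm: pure} forces every $1$-minimal subcode to have support weight $d$, so the non-minimality of $c_Q$ would have to be witnessed by a minimum weight codeword $c_F$ with $\supp(c_F)\subsetneq\supp(c_Q)$, contradicting what was just established.

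For the witness, fix arbitrary $\omega_1,\ldots,\omega_t,\omega\in\Fq$ and distinct $\omega'_1,\ldots,\omega'_{s-1}\in\Fq$, and set
\[
Q=\prod_{i=1}^{t}\bigl(1-(X_i-\omega_i)^{q-1}\bigr)\cdot (X_{t+2}-\omega)\prod_{j=1}^{s-1}(X_{t+1}-\omega'_j).
\]
Then $\deg Q=t(q-1)+s=r$, so $Q\in V_q(r,m)$, and a direct count yields $\wt(c_Q)=(q-1)(q-s+1)q^{m-t-2}$ together with the explicit description
\[
\supp(c_Q)=\{(\omega_1,\ldots,\omega_t)\}\times(\Fq\setminus\{\omega'_1,\ldots,\omega'_{s-1}\})\times(\Fq\setminus\{\omega\})\times\Fq^{m-t-2}.
\]
Comparing with $d=(q-s)q^{m-t-1}$ gives $\wt(c_Q)-d=(s-1)q^{m-t-2}>0$, since $s>1$, so $c_Q$ is not of minimum weight.

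The main obstacle will be ruling out any minimum weight codeword $c_F$ with $\supp(c_F)\subseteq\supp(c_Q)$. By the structure theorem for minimum weight codewords of $\mathcal{RM}_q(r,m)$, such an $F$ must have the form
\[
F=\omega_0\prod_{i=1}^{t}\bigl(1-(L_i-\omega''_i)^{q-1}\bigr)\prod_{j=1}^{s}(L_{t+1}-\omega^{(3)}_j),
\]
with linear polynomials $L_1,\ldots,L_{t+1}$ whose linear parts are linearly independent. Writing $V=\{P\in\Fq^m:L_i(P)=\omega''_i\text{ for }i=1,\ldots,t\}$, an affine subspace of dimension $m-t$, and letting $e$ be the dimension of the image of $V$ under the projection onto the first $t$ coordinates, the analysis splits on $e$. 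If $e\ge 1$, the $(q^e-1)q^{m-t-e}$ points of $V$ outside $\{(\omega_1,\ldots,\omega_t)\}\times\Fq^{m-t}$ must all lie in the $s$ parallel hyperplanes $\{L_{t+1}=\omega^{(3)}_j\}\cap V$, whose total cardinality is $sq^{m-t-1}$; this gives $q-q^{1-e}\le s$, which is ruled out by $s\le q-2$. If $e=0$, then $V=\{(\omega_1,\ldots,\omega_t)\}\times\Fq^{m-t}$, and fixing in turn $X_{t+1}=\omega'_1$ (with $X_{t+2},\ldots,X_m$ free) and then $X_{t+2}=\omega$ (with $X_{t+1}$ free) forces the linear part of $L_{t+1}|_V$ to vanish, placing the linear part of $L_{t+1}$ in the span of those of $L_1,\ldots,L_t$ and contradicting independence. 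The delicate part is handling the case $e\ge 1$ cleanly; the $e=0$ substitutions are exact analogues of those in the proof of Proposition \ref{thm:RMtiszero}.
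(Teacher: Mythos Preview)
Your proposal is correct and follows essentially the same route as the paper: exhibit the same witness $Q$, compute its weight, and rule out any minimum weight $F$ with $\supp(c_F)\subseteq\supp(c_Q)$ by analyzing the affine $(m-t)$-flat $V$ carrying $\supp(c_F)$ against the slice $A=\{(\omega_1,\ldots,\omega_t)\}\times\Fq^{m-t}$ carrying $\supp(c_Q)$.

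The one place your argument is more elaborate than necessary is the case $e\ge 1$. The paper dispatches this in one line: since $\supp(c_F)\subseteq V\cap A$ and $V\ne A$ are distinct $(m-t)$-flats, $\dim(V\cap A)\le m-t-1$, whence $(q-s)q^{m-t-1}=|\supp(c_F)|\le q^{m-t-1}$, i.e.\ $s\ge q-1$, contradicting $s<q-1$. There is no need to introduce the projection dimension $e$ or to count points of $V$ lying in the $s$ parallel hyperplanes; the direct cardinality bound on $V\cap A$ suffices. Your complement count is valid, but note that the figure $(q^e-1)q^{m-t-e}$ tacitly assumes $(\omega_1,\ldots,\omega_t)\in\pi(V)$; this does hold (otherwise $\supp(c_F)\subseteq V\cap A=\emptyset$), but it should be said. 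For $e=0$ your substitution argument is exactly the reduction to Proposition~\ref{thm:RMtiszero} that the paper invokes.
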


\begin{proof}
Choose distinct elements $\omega_1,\ldots, \omega_{s-1}\in\Fq$ and an arbitrary $\omega\in\Fq$. Define
$$
Q(X_1,\ldots, X_m)= \left(\prod\limits_{i=1}^{t}(X_i^{q-1}-1)\right) \left( \prod\limits_{j=1}^{s-1}(X_{t+1}-\omega_j) \right)(X_{t+2}-\omega)
$$
Clearly, $Q\in V_q(r,m)$ and the  corresponding codeword $c_Q = \Ev(Q)$ has weight $(q-s+1)(q-1)q^{m-t-2}$. On the other hand, by \eqref{eq:RMnkd}, the minimum distance of $\mathcal{RM}_q(r,m)$ is $(q-s)q^{m-t-1}$.   Observe that 
$$
(q-s+1)(q-1)q^{m-t-2} - (q-s)q^{m-t-1} = (s-1)q^{m-t-2} > 0 \quad \text{since } s>1. 
$$
Thus, as in the proof of Proposition~\ref{thm:RMtiszero}, it suffices to show that if there exists $F$ in $V_q(r,m)$ such that 
$c_F$ is a minimum weight codeword with $\supp(c_F)\subset \supp (c_Q)$, then we arrive at a contradiction. Again, any such $F$ has to be of the form
$$
F(X_1, \dots , X_m) =  \left(\prod\limits_{i=1}^{t}(L_i^{q-1}-1)\right) \left( \prod\limits_{j=1}^{s}(L_{t+1}-\omega'_j)\right)
$$
for some 
distinct $\omega^\prime_1, \dots , \omega^\prime_s\in \Fq$, and linearly independent linear polynomials $L_1, \dots , L_{t+1}\in \Fq[X_1, \dots , X_m]$ with $L_{t+1}$ homogeneous. Note that $\supp(c_Q)$ is contained in the linear space $A = \{(a_1, \dots , a_m)\in \Fq^m : a_i = 0 \text{ for } i=1,\dots , t\}$,  which can be identified with $\AA^{m-t}$, 
while $\supp (c_F)$ is contained in the affine space $A':=\{\mathbf{a}\in \Fq^m: L_i(\mathbf{a}) =0 \text{ for } i=1,\dots , t\}$ of dimension $m-t$. Further, since $\supp(c_F)\subset \supp (c_Q)$, we obtain $\supp(c_F)\subseteq A \cap A'$. Now if $A\ne A'$, then $\dim (A\cap A') \le m-t-1$, and so $(q-s)q^{m-t-1} \le q^{m-t-1}$, which is impossible because $s < q-1$. This shows that $A=A'$. Consequently, 
$$
F(0, \dots , 0, X_{t+1}, \dots , X_m) = \prod\limits_{j=1}^{s}\left(L_{t+1}(0, \dots , 0, X_{t+1}, \dots , X_m)-\omega'_j\right)
$$
gives 
a minimum weight codeword in $\mathcal{RM}_q(s,m-t)$ whose support contains the support of the
codeword of $\mathcal{RM}_q(s,m-t)$ associated to $Q(0, \dots , 0, X_{t+1}, \dots , X_m)$. But then this leads to a 
contradiction exactly as in the proof of Proposition~\ref{thm:RMtiszero}.
\end{proof}

\subsection{The case of $s=0$}\label{subsec:siszero} 
Since the binary case and the case $t=0$ have already been dealt with in subsections \ref{subsec:binary} and \ref{subsec:tis0}, we shall assume that $q\ge 3$ and $1\le t \le m-1$. Then $s=0$ implies that $r = t(q-1) > 1$.

\begin{proposition}
	\label{thm:RMsiszero}
Assume that  $q \ge 3$ and $r = t(q-1)$ with $1 \le t \le m-1$. Then  any minimal free $\mathbb{N}$-resolution of the   Reed-Muller code $\mathcal{RM}_q(r,m)$  is not pure.
\end{proposition}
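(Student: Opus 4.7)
The plan is to mimic the strategy used in the proofs of Propositions~\ref{thm:RMtiszero} and~\ref{thm:RMtpositive}: exhibit a codeword $c_Q \in \mathcal{RM}_q(r,m)$ whose weight strictly exceeds the minimum distance $d = q^{m-t}$, and then rule out the possibility that any minimum-weight codeword has support contained in $\supp(c_Q)$. The same dichotomy as in those propositions will then close the argument via Theorem~\ref{thm: pure}: either $c_Q$ is itself a minimal codeword (done, since $\wt(c_Q) \ne d$) or it dominates a strictly smaller minimal codeword $c_F$ whose weight is either non-minimum (done) or equal to $d$ (to be ruled out).

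For the construction, I would pick $q - 2$ distinct elements $\omega_1, \dots, \omega_{q-2} \in \Fq$ (denote the remaining two elements of $\Fq$ by $\omega_{q-1}, \omega_q$), fix an arbitrary $\omega \in \Fq$, and set
\[
Q = \left(\prod_{i=1}^{t-1}(X_i^{q-1} - 1)\right)\left(\prod_{j=1}^{q-2}(X_t - \omega_j)\right)(X_{t+1} - \omega).
\]
Since $1 \le t \le m - 1$, this is well-defined; its degree is $(t-1)(q-1) + (q-2) + 1 = t(q-1) = r$, so $Q \in V_q(r,m)$. A direct count gives $\wt(c_Q) = 2(q-1)q^{m-t-1}$, which exceeds $d = q^{m-t}$ by $(q-2)q^{m-t-1}$; this is positive since $q \ge 3$.

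To eliminate the remaining case, I would assume $F \in V_q(r,m)$ yields a minimum-weight codeword with $\supp(c_F) \subseteq \supp(c_Q)$. Since $s = 0$, the characterization of minimum-weight codewords of $\mathcal{RM}_q(r,m)$ forces $F = \omega_0 \prod_{i=1}^{t}(1 - L_i^{q-1})$ for some affine linear forms $L_1, \dots, L_t$ whose linear parts are linearly independent, so $V := \supp(c_F) = \{\mathbf{a}\in \Fq^m : L_1(\mathbf{a}) = \cdots = L_t(\mathbf{a}) = 0\}$ is an affine subspace of dimension $m - t \ge 1$. On the other hand,
\[
\supp(c_Q) = \{\mathbf{a} \in \Fq^m : a_1 = \cdots = a_{t-1} = 0,\ a_t \in \{\omega_{q-1}, \omega_q\},\ a_{t+1} \ne \omega\}.
\]
Any non-constant affine function on $V$ surjects onto $\Fq$, so the coordinate functions $X_1, \dots, X_{t-1}$ must vanish identically on $V$; $X_t$ must be constant on $V$ (as it takes only $2 < q$ values there); and $X_{t+1}$ must be constant on $V$ (as it avoids the value $\omega$). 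These $t + 1$ independent affine constraints confine $V$ to an affine subspace of dimension at most $m - t - 1$, contradicting $\dim V = m - t$.

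The hard part is designing the polynomial $Q$ correctly: the support $\supp(c_Q)$ must place enough constraints on any candidate $V \subseteq \supp(c_Q)$ to force a dimension drop of $t + 1$, while $Q$ must have degree exactly $r$ and yield a codeword whose weight differs from $d$. Once a suitable $Q$ is found, the rest of the argument reduces to a weight count and the elementary observation about non-constant affine functions over $\Fq$.
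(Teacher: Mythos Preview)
Your proposal is correct and follows essentially the same approach as the paper: the polynomial $Q$ you construct is (up to relabeling the $\omega_j$'s) identical to the one in the paper, and the overall strategy---compute $\wt(c_Q) > d$, then rule out any minimum-weight $c_F$ with $\supp(c_F)\subseteq\supp(c_Q)$---is the same. The only difference is in how the final contradiction is extracted: the paper argues via a line-through-two-points argument that the affine $(m-t)$-space $\supp(c_F)$ must lie entirely in one of the two ``slices'' $A_i$ (and then $q^{m-t}\le (q-1)q^{m-t-1}$), whereas you use the equivalent observation that a non-constant affine function on an affine space over $\Fq$ is surjective to force $X_t$ and $X_{t+1}$ to be constant on $\supp(c_F)$, yielding $t+1$ independent constraints and hence $\dim \supp(c_F)\le m-t-1$. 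Your formulation is arguably a bit more streamlined, but the two arguments are the same idea in different clothing.
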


\begin{proof}
Write $\Fq = \{\omega_1,\ldots, \omega_{q}\}$ and pick any 
$\omega\in\Fq$. Consider 
$$
Q(X_1,\ldots, X_m)= \left(\prod\limits_{i=1}^{t-1}(X_i^{q-1}-1)\right) \left( \prod\limits_{j=3}^{q}(X_{t+1}-\omega_j) \right)(X_{t+2}-\omega)
$$
Then $\deg Q = (t-1)(q-1) + (q-2)+1 = t(q-1)=r$ and so $Q\in V_q(r,m)$. Also, we can write
$\supp(c_Q) = A_1\cup A_2$, where for $i=1,2$, 
$$
A_i:=\{\mathbf{a}=(a_1, \dots , a_m)\in \Fq^m: a_1=\dots = a_t=0 , \ a_{t+1} = \omega_i, \text{ and } a_{t+2} \ne \omega\}.
$$  
Clearly, $A_1, A_2$ are disjoint and so 
$\wt(c_Q) = 2(q-1)q^{m-t-1}$. 
The minimum distance of $\mathcal{RM}_q(r,m)$ in this case is $q^{m-t}$, and $2(q-1)q^{m-t-1} > q^{m-t}$, since $q\ge 3$. Thus,  $c_Q$ is not a minimum weight codeword.  As in the proof of Proposition~\ref{thm:RMtiszero}, it suffices to show that the existence of $F\in V_q(r,m)$ such that 
$c_F$ is a minimum weight codeword with $\supp(c_F)\subset \supp (c_Q)$ leads to a contradiction. By the characterization of minimum weight codewords of $\mathcal{RM}_q(r,m)$, any such $F$ has to be of the form
$F(X_1, \dots , X_m) = \prod_{i=1}^{t}(L_i^{q-1}-1)$
for some linearly independent linear polynomials $L_1, \dots , L_t$ in $\Fq[X_1, \dots , X_m]$. Hence, $\supp(c_F)$ is the affine space $A':=\{\mathbf{a}\in \Fq^m: L_i(\mathbf{a}) =0 \text{ for } i=1,\dots , t\}$. Since $\supp(c_F)\subset \supp (c_Q)$, we can argue as in the proof of  Proposition~\ref{thm:RMtpositive} to deduce that $A'$ is in fact, the linear space $\{\mathbf{a}\in \Fq^m: a_1 = \dots = a_t =0 \}$. We now claim that $\supp(c_F)$ is either disjoint from $A_1$ or from $A_2$. Indeed, if this is not the case then there are $P_i\in \supp(c_F)\cap A_i$ for $i=1,2$. But then 
$P_{\lambda}:= P_1 + \lambda(P_2-P_1) \in \supp(c_F)$ for any $\lambda \in \Fq$, since $\supp(c_F) = A'$ is linear. Also since $q\ne 3$, we can pick $\lambda \in \Fq$ such that $\lambda\neq 0$ and $\lambda \neq 1$. Now 
$\supp(c_F)\subset \supp (c_Q) = A_1\cup A_2$ leads to a contradiction since the $t^{\rm th}$ coordinate of $P_{\lambda}$ is neither $\omega_1$ nor $\omega_2$. This proves the claim. It follows that $A'= \supp(c_F)\subset A_i$ for some $i\in \{1, 2\}$. But then $q^{m-t} \le (q-1)q^{m-t-1}$, which is a contradiction. This proves the proposition. 
\end{proof}

\subsection{The case of $t = m-1$ and $1<s < q-2$}\label{subsec:tismminusone}  
We will now consider the last case of nontrivial Reed-Muller codes $\mathcal{RM}_q(r,m)$ of order $r=t(q-1) + s$, where $r>1$ and $s \ne 1$, namely, when $t = m-1$ and $s > 1$. Note that if we allow $s = q-2$, then $\mathcal{RM}_q(r,m)$ becomes a MDS code, and so we shall assume that $1<  s< q-2$. In particular, this implies that $q\ge 5$. 
\begin{proposition}
	\label{thm:RMstsmminusone}
Assume that   $r = (m-1)(q-1)+s$ with $1 < s < q-2$. Then  any minimal free $\mathbb{N}$-resolution of the  Reed-Muller code $\mathcal{RM}_q(r,m)$  is not pure.
\end{proposition}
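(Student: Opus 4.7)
The plan is to follow the strategy of Propositions~\ref{thm:RMtiszero}, \ref{thm:RMtpositive}, and \ref{thm:RMsiszero}: exhibit an explicit $Q\in V_q(r,m)$ such that $\wt(c_Q)\ne d=q-s$, and prove that no minimum weight codeword of $\mathcal{RM}_q(r,m)$ has support contained in $\supp(c_Q)$. Once this is in place the standard dichotomy applies: either $c_Q$ itself is a minimal codeword (whose weight differs from $d_1$), or one can pick a minimal $c_F$ with $\supp(c_F)\subsetneq \supp(c_Q)$ which, by the support bound, cannot be of minimum weight; either way, Theorem~\ref{thm: pure} yields the non-purity.

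The feature special to the case $t=m-1$ is that after applying the usual ``collapsing'' factor $\prod_{i=1}^{m-2}(X_i^{q-1}-1)$ only two variables remain free, and minimum weight codewords have supports that are affine lines (minus $s$ points) rather than higher-dimensional affine subspaces. My candidate is the ``grid'' polynomial
\[
Q(X_1,\ldots,X_m) := \left(\prod_{i=1}^{m-2}(X_i^{q-1}-1)\right)\prod_{i=1}^{s+1}(X_{m-1}-\omega_i)\prod_{j=1}^{q-2}(X_m-\omega'_j),
\]
where $\omega_1,\ldots,\omega_{s+1}\in\Fq$ are distinct and $\omega'_1,\ldots,\omega'_{q-2}\in\Fq$ are distinct, which is possible since $s+1\le q-2<q$ by the hypothesis $s<q-2$. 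A degree count gives $\deg Q = (m-2)(q-1)+(s+1)+(q-2)=r$ with each $\deg_{X_i} Q<q$, so $Q\in V_q(r,m)$. The support of $c_Q$ equals the grid
\[
G := \bigl\{(0,\ldots,0,a,b)\in\Fq^m : a\notin\{\omega_i\}_{i=1}^{s+1},\; b\notin\{\omega'_j\}_{j=1}^{q-2}\bigr\},
\]
which lies in the affine plane $\Pi=\{X_1=\cdots=X_{m-2}=0\}$ and has cardinality $2(q-s-1)$. Since $s<q-2$ implies $q-s-1>1$, this strictly exceeds $d=q-s$, so $\wt(c_Q)>d$.

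The heart of the argument is to show that every affine line $L\subseteq\Fq^m$ satisfies $|L\cap G|\le q-s-1$; since a minimum weight support has the form $L'\setminus T$ with $L'$ an affine line and $|T|=s$ (hence size $q-s$), this bound precludes any such support from lying in $G$. Lines not contained in $\Pi$ meet $\Pi$ in at most one point, so one may assume $L\subseteq\Pi$ and identify $\Pi$ with $\Fq^2$ via the coordinates $(X_{m-1},X_m)$. The proof then splits into three sub-cases in $\Fq^2$: a horizontal line $X_m=c$ meets $G$ in at most $q-(s+1)=q-s-1$ points; a vertical line $X_{m-1}=c$ meets $G$ in at most $q-(q-2)=2$ points; and an oblique line $X_m=\lambda X_{m-1}+\mu$ with $\lambda\ne 0$ meets $G$ in $q-|A\cup B_{\lambda,\mu}|$ points, where $A=\{\omega_i\}$ and $B_{\lambda,\mu}=\{\lambda^{-1}(\omega'_j-\mu):1\le j\le q-2\}$, which is at most $q-\max(|A|,|B_{\lambda,\mu}|)=q-(q-2)=2$. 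The assumption $s<q-2$, i.e.\ $s\le q-3$, gives $q-s-1\ge 2$, so the uniform bound $|L\cap G|\le q-s-1$ holds in every case. The main obstacle is precisely the oblique sub-case: both exclusion sets $A$ and $B_{\lambda,\mu}$ must be large enough to force $|L\cap G|$ well below $q-s$, which is the reason for the asymmetric choice of sizes $|A|=s+1$ and $|B|=q-2$ in the construction.
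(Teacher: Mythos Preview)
Your proof is correct and follows essentially the same approach as the paper. Your polynomial $Q$ coincides with the paper's choice up to swapping the roles of $X_{m-1}$ and $X_m$, yielding the same $2\times(q-s-1)$ grid as support. The only cosmetic difference is that you establish the key bound $|L\cap G|\le q-s-1$ by an explicit case split on horizontal/vertical/oblique lines, whereas the paper observes directly that $G$ sits in the union of two parallel affine lines $A_1\cup A_2$ and argues that a minimum-weight support (which has $q-s\ge 3$ collinear points) must then lie entirely in one $A_i$, giving $q-s\le |G\cap A_i|=q-s-1$.
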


\begin{proof}
As in the proof of Proposition \ref{thm:RMsiszero}, 
write $\Fq = \{\omega_1,\ldots, \omega_{q}\}$ and pick any 
$\omega\in\Fq$. Also let $\nu_1, \dots , \nu_{s+1}$ be any distinct elements of $\Fq$. Consider 
$$
Q(X_1,\ldots, X_m)= \left(\prod\limits_{i=1}^{m-2}(X_i^{q-1}-1)\right) \left( \prod\limits_{j=3}^{q}(X_{m-1}-\omega_j) \right)  \left( \prod\limits_{j=1}^{s+1} (X_{m}-\nu_j) \right). 
$$
Then $\deg Q = (m-2)(q-1) + (q-2)+(s+1) = (m-1)(q-1)+s=r$ and so $Q\in V_q(r,m)$. Also, $\wt(c_Q) = 2(q-s-1)$ and 
$\supp(c_Q) \subset A_1\cup A_2$, where  $A_i$ denotes the affine line 
$\{\mathbf{a}=(a_1, \dots , a_m)\in \Fq^m: a_1=\dots = a_{m-2}=0 , \ a_{m-1} = \omega_i\}$
for $i=1,2$. The minimum distance of $\mathcal{RM}_q(r,m)$ in this case is $q-s$ and it is less than $2(q-s-1)$, since $s < q-2$. 
As in the proof of Proposition~\ref{thm:RMtiszero}, it suffices to show that the existence of $F\in V_q(r,m)$ such that 
$c_F$ is a minimum weight codeword with $\supp(c_F)\subset \supp (c_Q)$ leads to a contradiction. By the characterization of minimum weight codewords of $\mathcal{RM}_q(r,m)$, any such $F$ has to be of the form
$$
F(X_1, \dots , X_m) = \prod_{i=1}^{m-1}(L_i^{q-1}-1) \prod_{j=1}^s (L_m - \omega'_j)
$$
for some linearly independent linear polynomials $L_1, \dots , L_m$ in $\Fq[X_1, \dots , X_m]$ and distinct $\omega^\prime_1, \dots , \omega^\prime_s\in \Fq$. Also, arguing as in the proof of  Theorem~\ref{thm:RMsiszero}, we see that
$\supp(c_F)$ is contained in the affine line $A':=\{\mathbf{a}\in \Fq^m: L_i(\mathbf{a}) =0 \text{ for } i=1,\dots , m-1\}$. Now if any two points of  $\supp(c_F)$ belong to different affine lines $A_1$ and $A_2$, then $A_i\cap A'$ is nonempty for $i=1,2$ and dimension considerations imply that $A_1=A_2=A'$, which is a contradiction. Hence, the $(q-s)$ points of $\supp(c_F)$ are contained in $\supp (c_Q)\cap A_i$ for a unique $i\in \{1,2\}$. But then $q-s \le q-s-1$, which is a contradiction. This proves the proposition. 
\end{proof}

An easy consequence of the above result is that unlike linear resolutions (which correspond to MDS codes), purity of a resolution is not preserved when passing to the dual. 

\begin{corollary}
	There exist linear codes $C$ with a pure resolution such that $C^\perp$ does not have  a pure resolution.
\end{corollary}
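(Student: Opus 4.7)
The plan is to exhibit explicit examples by combining Theorem~\ref{thm: firstReedMuller} (purity of first-order Reed-Muller codes) with the duality formula \eqref{eq:RMdual} and the non-purity results of Section~4. The cleanest family of examples is the binary first-order Reed-Muller codes of sufficiently large $m$.

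More precisely, I would fix $m\geq 4$ and take $C := \mathcal{RM}_2(1,m)$. By Theorem~\ref{thm: firstReedMuller}, this $C$ has a pure minimal free resolution. Next, by \eqref{eq:RMdual} applied with $q=2$ and $r=1$, we have
\[
C^\perp \,=\, \mathcal{RM}_2(r^\perp,m), \quad \text{where } r^\perp = m(q-1)-1-1 = m-2.
\]
Since $m\geq 4$, we have $1 < m-2 \leq m-2$, so the hypotheses of Proposition~\ref{thm: binaryRM} are satisfied with $r=m-2$. Therefore $C^\perp = \mathcal{RM}_2(m-2,m)$ does not admit a pure resolution, and this provides the required $C$.

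There is no real obstacle in this argument: the corollary is an immediate combination of the three ingredients above, and one should remark at the end that other families of examples are available, for instance $C = \mathcal{RM}_q(1,m)$ with suitable $q,m$ so that $C^\perp = \mathcal{RM}_q(m(q-1)-2,m)$ falls in the non-pure ranges covered by Propositions~\ref{thm:RMsiszero} or \ref{thm:RMstsmminusone}. I would keep the written proof to a sentence or two, citing Theorem~\ref{thm: firstReedMuller}, equation \eqref{eq:RMdual}, and Proposition~\ref{thm: binaryRM}.
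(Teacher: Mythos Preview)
Your proof is correct and follows the same strategy as the paper: combine Theorem~\ref{thm: firstReedMuller} with the duality formula~\eqref{eq:RMdual} and one of the non-purity propositions of Section~4. The only difference is the choice of example: you take $q=2$, $m\ge 4$ and invoke Proposition~\ref{thm: binaryRM} for the dual $\mathcal{RM}_2(m-2,m)$, whereas the paper leaves $q$ general, writes the dual as $\mathcal{RM}_q((m-1)(q-1)+(q-3),m)$, and invokes Proposition~\ref{thm:RMstsmminusone} (which implicitly needs $q\ge 5$ so that $1<q-3<q-2$). Both choices work equally well, and you already note the alternative in your closing remark.
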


\begin{proof}
By Theorem \ref{thm: firstReedMuller},  the first order Reed-Muller code $\mathcal{RM}_q(1, m)$ has a pure resolution. But the dual of $\mathcal{RM}_q(1, m)$ is $\mathcal{RM}_q((m-1)(q-1) + (q-3), m)$ and it does not have a pure resolution, thanks to Proposition \ref{thm:RMstsmminusone}. 
\end{proof}

We can consolidate the results 
in subsections \ref{subsec:binary}--\ref{subsec:tismminusone} to obtain the following. 

\begin{theorem}
	\label{thm:RMfinal}
Assume that  $m \ge 2$ and $1< r < m(q-1)-1$. Write $r = t(q-1)+s$, where $0 \le t \le m-1$ and $0\le s < q-1$. Suppose $s\ne 1$.  Then  any minimal free $\mathbb{N}$-resolution of the  Reed-Muller code $\mathcal{RM}_q(r,m)$  is not pure.
\end{theorem}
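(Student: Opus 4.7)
The proof will be a pure consolidation argument: each admissible triple $(q,t,s)$ satisfying the hypotheses falls into the scope of exactly one of Propositions \ref{thm: binaryRM}, \ref{thm:RMtiszero}, \ref{thm:RMtpositive}, \ref{thm:RMsiszero}, or \ref{thm:RMstsmminusone}, so the plan is to carry out this case analysis and verify exhaustiveness.

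First I would separate on the characteristic. If $q=2$, the equation $r=t(q-1)+s$ with $0\le s<q-1$ forces $s=0$ and $r=t$; the hypotheses $1<r<m-1$ then give $2\le r\le m-2$, which in particular forces $m\ge 4$, so Proposition \ref{thm: binaryRM} applies directly. (For $q=2$ and $m\in\{2,3\}$ there is no $r$ meeting the hypotheses, so there is nothing to prove.)

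Next I would handle $q\ge 3$ by splitting on $s$. Since $s\ne 1$, either $s=0$ or $s\ge 2$. In the subcase $s=0$, we have $r=t(q-1)$, and $1<r<m(q-1)-1$ forces $1\le t\le m-1$, whence Proposition \ref{thm:RMsiszero} applies. In the subcase $s\ge 2$, the inequality $s<q-1$ forces $q\ge 4$; I would then split further on $t$: if $t=0$ then $r=s$ with $1<s<q-1$, covered by Proposition \ref{thm:RMtiszero}; if $0<t<m-1$ then the hypotheses $1<s<q-1$ hold and Proposition \ref{thm:RMtpositive} applies; finally, if $t=m-1$ then $r=(m-1)(q-1)+s$, and the bound $r<m(q-1)-1$ gives $s<q-2$, so together with $s\ge 2$ we obtain $1<s<q-2$ and Proposition \ref{thm:RMstsmminusone} applies.

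To finish, I would observe that the five cases enumerated above are mutually exclusive and jointly exhaustive of the hypotheses of the theorem, and that in each case the cited proposition yields the conclusion that no minimal free $\mathbb{N}$-resolution of $\mathcal{RM}_q(r,m)$ is pure. There is no real technical obstacle; the only point requiring care is to check that the edge values $t=0$, $t=m-1$, $s=q-2$ and $r=m(q-1)-1$ fall on the correct sides of the strict inequalities in the individual propositions, which is an immediate consequence of the strict bound $r<m(q-1)-1$ in the theorem's hypothesis.
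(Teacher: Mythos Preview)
Your proposal is correct and follows the same approach as the paper: the paper's proof is the one-line statement that the result follows from Propositions \ref{thm: binaryRM}, \ref{thm:RMtiszero}, \ref{thm:RMtpositive}, \ref{thm:RMsiszero}, and \ref{thm:RMstsmminusone}, and you have simply spelled out the case analysis verifying that every admissible $(q,t,s)$ falls under one of these propositions.
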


\begin{proof}
Follows from Propositions \ref{thm: binaryRM}, \ref{thm:RMtiszero}, \ref{thm:RMtpositive}, \ref{thm:RMsiszero}, and 
\ref{thm:RMstsmminusone}.
\end{proof}

\section{On the Purity and Resolutions of some two-weight Codes} 

 This section is devoted to two-weight codes. A linear code $C$ is said to be a \emph{ two-weight code} if there are two distinct positive integers $w_1$ and $w_2$ such that every nonzero codeword of $C$ has weight either $w_1$ or $w_2$.  We will usually take $w_1 < w_2$ so that $w_1 = d_1(C)$. We have seen in Corollary~\ref{cor: constwt} that the resolution of constant weight codes are pure and their Betti numbers 
are explicitly known. 
The first order Reed-Muller codes  are  examples of  two-weight codes, and Theorem \ref{thm: firstReedMuller} shows that  their resolutions are pure and the Betti numbers can be explicitly determined. Thus, it is natural to ask if every two-weight code has pure resolution. In this section we will choose several examples of two-weight codes given by Calderbank and Kantor \cite{CK} and see that some of them have pure resolution and others do not. In \cite{CK}, these codes are referred to by a nomenclature such as RT1, TF1, TF$1^d$, etc., and this is indicated in parenthesis at the beginning of each of the examples considered here. We also compute the Betti numbers of some of the two-weight codes irrespective of whether or not their resolution is pure. The examples of two-weight codes given in \cite{CK} are defined geometrically. 
So before considering them 
here, we recall a geometric language for codes and  translate our characterization of purity (Theorem $\ref{thm: pure}$) in 
 this language.

As before, fix positive integers $n,k$ with $k\le n$ and a prime power $q$. 
We denote by $\mathbb{P}^{k-1}$ the $(k-1)$-dimensional projective space over the finite field $\Fq$. 
A (nondegenerate) $[n, k]_q$ \emph{projective system} is a multiset 
of $n$ points in $\mathbb{P}^{k-1}$ that do not lie on a hyperplane of $\mathbb{P}^{k-1}$. 
Let $\P$ be a $[n, k]_q$ projective system. For $r=1, \dots , k$, the $r^{\it th}$ \emph{generalized Hamming weight}, or the  $r^{\it th}$ \emph{higher weight} of  $\P$ is defined by
$$
d_r(\P) =n-\max\{|\P\cap\Pi_r|:\;\Pi_r\ \text{ linear subspace of  }\mathbb{P}^{k-1} \text{ with } \codim\Pi_r=r\}. 
$$
Here the ``cardinality" $|\P\cap\Pi_r|$ is understood as the sum of multiplicities of points of $\P$ that are in $\Pi_r$. Note that the only 
linear subspace 
of codimension $k$ in $\mathbb{P}^{k-1}$ is the empty set, whereas  
those of codimension $k-1$ consist of a single point. 
Thus,    
\begin{equation}
\label{eq:dkP}
d_k(\P) = n \quad \text{and} \quad d_{k-1}(\P) = n-1. 
\end{equation} 
We can naturally associate a nondegenerate $[n,k]_q$-linear code to $\P$ as follows. Choose representatives 
$P_1, \ldots, P_n$ in $\Fq^k$ corresponding to the $n$ points of $\P$. 
Let $(\mathbb{F}_q^k)^*$ be the dual space of the vector space $\mathbb{F}_q^k$. Consider the evaluation map
$$
\Ev: (\mathbb{F}_q^k)^*\to \Fqn\text{ defined by }\Ev(f)= (f(P_1),\ldots, f(P_n)).
$$
The image of $\Ev$ is a linear subspace $C$ of $\Fq^n$ such that $\dim C=k$ and $C$ is not contained in a coordinate hyperplane of $\Fq^n$. This, then, is the  $[n,k]_q$-linear code associated to $\P$. We refer to Tsfasman,  Vl{\u{a}}du{\c{t}} and  Nogin  \cite{TVN} for more on projective systems and simply remark that the above association gives rise to a one-to-one correspondence between the equivalence classes of $[n, k]_q$ projective systems and nondegenerate $[n, k]_q$-linear codes, which preserves generalized Hamming weights. Also, subcodes of $C$ of dimension $r$ correspond to linear subspaces of $\mathbb{P}^{k-1}$ of codimension $r$. Thus, we define the \emph{support} of a linear subspace $\Pi_r$ of $\mathbb{P}^{k-1}$ with $\codim \Pi_r =r$,  to be the multiset 
$\P\setminus \P\cap\Pi_r$. This corresponds precisely to the support of the corresponding subcode of $C$.
As a consequence, we obtain the following geometric translation of our characterization of purity. 


\begin{theorem}
	\label{thm: geompure}
	Let $\P\subseteq \mathbb{P}^{k-1} $ be an $[n, k]_q$ projective system and let $C$ be the corresponding $[n, k]_q$-code. The $\mathbb{N}$-graded minimal free resolution of $C$ is pure if and only if for every $1\leq r\leq  k-1$ and every linear subspace $\Pi_r\subset\mathbb{P}^{k-1}$ of codimension $r$, there exists a  linear subspace $H(\Pi_r)\subset\mathbb{P}^{k-1}$ of codimension $r$ with $\Pi_r\cap\P\subseteq H(\Pi_r)\cap \P$ and 
$|H(\Pi_r)\cap \P| = n-d_r(\P)$.
\end{theorem}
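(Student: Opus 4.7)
The plan is to translate Theorem~\ref{thm: pure} from the coding-theoretic language into the projective-system language by using the correspondence $D \leftrightarrow \Pi_D$ between $r$-dimensional subcodes $D$ of $C$ and linear subspaces $\Pi_D \subset \PP^{k-1}$ of codimension $r$. Under this correspondence, as recalled just before the statement of the theorem, $\supp(D) = \P \setminus (\P \cap \Pi_D)$ as multisets, so $\wt(D) = n - |\P \cap \Pi_D|$, and $\supp(D') \subseteq \supp(D)$ if and only if $\P \cap \Pi_D \subseteq \P \cap \Pi_{D'}$. Note also that the case $r = k$ in Theorem~\ref{thm: pure} is automatic: the only $k$-dimensional subcode of $C$ is $C$ itself and $\wt(C) = n = d_k(\P)$ by \eqref{eq:dkP} (or equivalently by the nondegeneracy of $\P$). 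Hence it suffices to treat the range $1 \le r \le k-1$ appearing in the statement.

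For the forward direction, I would assume that the resolution of $C$ is pure. Given a codimension-$r$ subspace $\Pi_r$ with associated subcode $D$, consider the finite collection of $r$-dimensional subcodes $D'$ of $C$ satisfying $\supp(D') \subseteq \supp(D)$, and pick $D^*$ in this collection with smallest support. Any $r$-dimensional subcode $D''$ with $\supp(D'') \subseteq \supp(D^*)$ lies in the same collection, so by minimality $D'' = D^*$; hence $D^*$ is $r$-minimal in $C$. Theorem~\ref{thm: pure} then gives $\wt(D^*) = d_r(\P)$, and $H(\Pi_r) := \Pi_{D^*}$ is a codimension-$r$ subspace satisfying $\P \cap \Pi_r \subseteq \P \cap H(\Pi_r)$ and $|H(\Pi_r) \cap \P| = n - d_r(\P)$.

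For the converse, I would assume the geometric condition and let $D$ be any $r$-minimal subcode of $C$, with associated subspace $\Pi_r$. The hypothesis furnishes a codimension-$r$ subspace $H(\Pi_r)$ with $\P \cap \Pi_r \subseteq \P \cap H(\Pi_r)$ and $|H(\Pi_r) \cap \P| = n - d_r(\P)$. The corresponding $r$-dimensional subcode $D^*$ satisfies $\supp(D^*) \subseteq \supp(D)$, so the $r$-minimality of $D$ forces $D^* = D$, giving $\wt(D) = d_r(\P)$. Thus every $r$-minimal subcode has support weight $d_r$, and Theorem~\ref{thm: pure} yields purity of the resolution of $C$.

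The substantive point, more than an obstacle, is keeping track of the order-reversal between the two partial orders (support containment on subcodes versus intersection containment on projective subspaces) and running the "descend to smallest support" argument carefully, so as to extract from an arbitrary $\Pi_r$ an $H(\Pi_r)$ that is not only contained-in-the-sense-of-intersections but actually comes from an $r$-minimal subcode of $C$.
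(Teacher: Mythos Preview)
Your proposal is correct and takes exactly the paper's approach—the paper's proof is the single line ``Follows from Theorem~\ref{thm: pure}''—and you have simply spelled out the dictionary between subcodes and projective subspaces in the intended way. The only step worth tightening is ``by minimality $D'' = D^*$'' in the forward direction: minimality of the support alone yields $\supp(D'') = \supp(D^*)$, not $D'' = D^*$; to finish, note (as in the proof of Proposition~\ref{pr}) that if $\sigma = \supp(D^*)$ then $\dim \widehat{S}(\sigma) = r$ (otherwise one could shrink the support further), so $D'' \subseteq \widehat{S}(\sigma) = D^*$ forces $D'' = D^*$.
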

\begin{proof}
Follows from Theorem $\ref{thm: pure}$. 
\end{proof}	

\begin{corollary}
	\label{cor: 2pure}
$\P\subseteq \mathbb{P}^{k-1} $ be an $[n, k]_q$ projective system and let $C$ be the corresponding linear code. 
Then the $\mathbb{N}$-graded resolution of $C$ is always pure at the $(k-1)^{\it th}$ and $k^{\it th}$ step.
\end{corollary}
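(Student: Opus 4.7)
My plan is to verify, via Remark \ref{rem:hsteps} applied with $h=k-1$, that every $(k-1)$-minimal and every $k$-minimal subcode of $C$ has the corresponding unique support weight $d_{k-1}(\P)$ or $d_k(\P)$, so that the left part of the resolution after $k-1$ steps is pure.

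At step $k$, the only codimension-$k$ linear subspace of $\mathbb{P}^{k-1}$ is $\emptyset$, so $C$ itself is the only $k$-dimensional subcode of $C$. Its support weight equals $d_k(\P)=n$ by \eqref{eq:dkP}, so trivially all $k$-minimal subcodes share this weight.

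At step $k-1$, the codimension-$(k-1)$ linear subspaces of $\mathbb{P}^{k-1}$ are precisely its points. The relation \eqref{eq:dkP} gives $n-d_{k-1}(\P)=1$, which means that every point of $\P$ occurs with multiplicity exactly one. A $(k-1)$-dimensional subcode $D$ of $C$ corresponds to a point $\Pi\in\mathbb{P}^{k-1}$, and its support weight is $n-|\P\cap\Pi|$; hence $\wt(D)=n-1$ when $\Pi\in\P$ and $\wt(D)=n$ otherwise. Since $n-1<n$, every $(k-1)$-minimal subcode must arise from a point $\Pi\in\P$, and all such subcodes share the common support weight $n-1=d_{k-1}(\P)$. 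Combined with the conclusion at step $k$, Remark \ref{rem:hsteps} then delivers purity at steps $k-1$ and $k$. No substantive difficulty arises: step $k$ is automatic because $C$ itself is the unique $k$-dimensional subcode, and step $k-1$ reduces to a short case split once \eqref{eq:dkP} is unpacked to say that $\P$ has no repeated points.
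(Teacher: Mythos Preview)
Your proof is correct and follows essentially the same path as the paper's: the paper observes from \eqref{eq:dkP} that $C$ is $(k-1)$-MDS and then invokes Corollary~\ref{Cor: MDS} to conclude that the left part of the resolution after step $k-1$ is linear (hence pure), whereas you unpack that same argument directly via Remark~\ref{rem:hsteps}. The substance is identical; the paper just packages the verification through the MDS characterization rather than checking the $(k-1)$- and $k$-minimal subcodes by hand.
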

\begin{proof}
From \eqref{eq:dkP}, we see that $C$ is $(k-1)$-MDS. Thus the desired result  follows from 
Corollary $\ref{Cor: MDS}$ and Theorem \ref{thm: geompure}.
\end{proof}

The following definition from \cite{CK} is a geometric counterpart of two-weight codes. 

\begin{definition}
Let $h_1, h_2$ be distinct nonnegative integers.
An $[n, k]_q$ projective system $\P$  is said to be a \emph{projective $(n, k, h_1, h_2)_q$ system} if 
 every hyperplane of $\mathbb{P}^{k-1}$ intersects $\P$ either at $h_1$ points or at $h_2$ points (counting multiplicities). 
\end{definition}

Note that if $\P$ is a projective $(n, k, h_1, h_2)_q$ system, then every nonzero codeword of the corresponding $[n,k]_q$ code $C$  is  of Hamming weight $w_1$ or $w_2$, where $w_i = n-h_i$ for $i=1,2$. Also note that for $i=1,2$, if $A_{w_i}$ denotes the number of codewords of $C$ of weight $w_i$, then 
\begin{equation} \label{AwN}
A_{w_i} = (q-1)\nu_i, 
\end{equation} 
where $\nu_i$ denotes the number of hyperplanes $\Pi$ of $\mathbb{P}^{k-1}$ such that $|\Pi \cap \P| = h_i$.
The factor $(q-1)$ is due to the fact that the codewords $\Ev (f)$ and $\Ev(\lambda f)$ of $C$ correspond to the same hyperplane in $\mathbb{P}^{k-1}$ for any $\lambda \in \Fq$ with $\lambda \ne 0$. 

We are now ready to discuss several examples from \cite{CK} of two-weight codes, and investigate their purity and minimal free resolutions. We use the following~notation. 
$$
p^{ }_j = p^{ }_j(q):= | \mathbb{P}^{j}(\Fq)| = \begin{cases} q^j + q^{j-1} + \dots + q + 1 & \text{if } j\ge 0, \\
0 & \text{if } j < 0. \end{cases}
$$

\begin{example}[RT1]
	\label{examp: RT1}
Take the base field as $\FF_{q^2}$ and let $\mathbb{P}=\mathbb{P}^{k-1}(\mathbb{F}_{q^2})$. Consider 
$\P=\mathbb{P}^{k-1}(\Fq)$ as a projective system in $\PP$. If $\Pi$ is a hyperplane in $\PP$, then it is given by an equation of the form $\sum_{i=1}^k z_iX_i =0$, where $z_1, \dots , z_k \in \FF_{q^2}$, not all zero. Fix a $\FF_{q}$-basis $\{1, \theta\}$ of $\FF_{q^2}$ and write $z_i = a_i+ \theta b_i$, where $a_i, b_i\in \Fq$ for $i=1, \dots, k$. Then 
$\P\cap \Pi$ consists of points $(c_1: \dots : c_k) \in \PP^{k-1}(\Fq)$ satisfying $\sum a_ic_i =0$ and $\sum b_ic_i =0$. 
Now if there is $\lambda \in \Fq$ such that $a_i = \lambda b_i$ for all $i=1, \dots , k$, or such that $b_i = \lambda a_i$ for all $i=1, \dots , k$, 
then $\P\cap \Pi$ corresponds to a $\Fq$-rational hyperplane in $\PP^{k-1}(\Fq)$. Otherwise, it corresponds to a  linear subspace of codimension 2 in $\PP^{k-1}(\Fq)$. Thus, $|\P\cap \Pi| = p^{ }_{k-2}(q)$ or $p^{ }_{k-3}(q)$. It follows that the linear code corresponding to $\P$, say $C$,  is a two-weight code of length $p^{ }_{k-1}(q)$ and dimension $k$ over $\FF_{q^2}$. Also, it is clear that as $\Pi_r$ varies over $\FF_{q^2}$-linear subspaces of codimension $r$ in $\PP$, the maximum possible value of 
$|\P\cap \Pi_r|$ is attained when $\Pi_r$ is $\Fq$-rational, and in that case $|\P\cap \Pi_r| = p^{ }_{k-1-r}(q)$ for 
$r=1, \dots , k$. It follows that the higher weights of $\P$ are given by $d_r=p^{ }_{k-1}(q) -  p^{ }_{k-1-r}(q)$ for 
$r=1, \dots , k$. 

To determine the purity of the 
minimal free resolution of $C$, fix a  $\FF_{q^2}$-linear subspace $\Pi$ of codimension $r$ in $\PP$. Let $t:=  \dim_{\Fq}(\Pi\cap\P)$. If $\Pi$ is not $\Fq$-rational, then $t < k-1-r$. Let $\{f_1,\ldots, f_{t+1}\}$ be a $\Fq$-basis of $\Pi\cap \P$. Extend this to a linearly independent set $\{f_1,\ldots, f_{t+1},\ldots, f_{k-r}\}\subset \P$. Note that the set $\{f_1,\ldots, f_{k-r}\}$ is linearly independent over $\mathbb{F}_{q^2}$. (This can be seen, as before, by expressing the coefficients in a linear dependence relation in terms of $1, \theta$.) Now if $H = H(\Pi_r)$ is the   linear subspace  of $\PP$ spanned by $\{f_1,\ldots, f_{k-r}\}$, then $\Pi\cap \P\subset H\cap\P$ and 
$|H\cap \P| = n - d_r(\P)$. Thus, Theorem \ref{thm: geompure} shows that the $\mathbb{N}$-graded  minimal free resolution of $C$
is pure. 
Moreover, it is of the form
$$
0\to R(-d_k)^{\beta_{k}}\to\cdots \to R(-d_2)^{\beta_{2}}\to R(-d_1)^{\beta_{1}}\to R
$$
where $d_r=p^{ }_{k-1}(q) -  p^{ }_{k-1-r}(q)$ and $\beta_r$'s are given by Herzog-K{\"{u}}hl equation for $r=1, \dots , k$. 
In fact, this is precisely the resolution for constant weight codes given in  Corollary~\ref{cor: constwt}. It may be noted that even though constant weight codes have been characterized by Johnsen and Verdure \cite[Thm. 2 and Prop. 4]{JV2} as those having a resolution as in Corollary~\ref{cor: constwt}, the code $C$ is not a constant weight code because it is a code over $\FF_{q^2}$, whereas the characterization is for $q$-ary codes. 
\end{example}

\begin{remark}
One can similarly consider $\P=\mathbb{P}^{k-1}(\Fq)\subseteq\mathbb{P}^{k-1}(\mathbb{F}_{q^m})$ for any $m\ge 2$, and show that the resolution of the linear code corresponding to this projective system is pure and of the form similar to that in Example \ref{examp: RT1} even though this code is not a two-weight code when $m>2$. 
\end{remark}

\begin{example}[TF1]
	\label{exam: TF1}
Assume that $q$ is even and consider the projective plane $ \mathbb{P}^2$  over $\Fq$.	Let $\P\subseteq \mathbb{P}^2$ be a \emph{hyperoval}, i.e.,  a set of $q+2$ distinct points, no three collinear, with the property that if $L$ is a line in $ \mathbb{P}^2$, then $|L\cap \P|=0\text{ or }2$. In this case, the corresponding code is an MDS $[q+2, 3]_q$-code and the resolution of this code is given by Corollary \ref{Cor: MDS}. 
\end{example}

\begin{example}[$\text{TF}1^d$]
	\label{exam: TF1d}
Suppose $q$ is even and  $\P $ is the hyperoval in the projective plane 
$\mathbb{P}^2$ over $\Fq$ as in  Example \ref{exam: TF1}.  
Let $\widehat{\mathbb{P}^2}$ be the dual projective plane.  Consider 
$$
\widehat{\P}=\{L: L \text{ is a line in $\mathbb{P}^2$ with  }|L\cap \P|=2 \}. 
$$
Note that $\widehat{\P}\subseteq \widehat{\mathbb{P}^2} $ and the points of the projective plane $\mathbb{P}^2$ are lines in $\widehat{\mathbb{P}^2}$. Note also that any two points of $\P$ correspond to a unique line $L$ in $ \mathbb{P}^2$ such that $L\in \widehat{\P}$. Consequently, 
$|\widehat{\P}|={q+2\choose 2}$. Now consider a line in $ \widehat{\mathbb{P}^2}$, i.e., a point $P$ of  $\mathbb{P}^2$. Counting the intersection of this line with $\widehat{\P}$ corresponds to counting lines $L\subseteq\mathbb{P}^2$ that pass through 
$P$ 
and intersect the hyperoval $\P$ in exactly two points. The cardinality of this intersection depends only on whether  or not the chosen point $P$ lies on $\P$. More precisely, if  $P \in \P$, then any line passing through $P$ will intersect the hyperoval $\P$ in two points,  and there are exactly $(q+1)$ such lines. On the other hand, if $P \not \in \P$, then choosing any point $Q$ on $\P$ will correspond to a unique line $L_Q$  passing through $P$ and $Q$ such that $L_Q$ intersects $\P$ in another point $Q'\ne Q$. Further, since  each $L_Q$ passes through $P$, the points $Q' \in \P$ corresponding to $Q\in \P$ are distinct. Since $|\P|=q+2$, it follows that there are exactly $(q+2)/2$ lines of the form $L_Q$. This shows that $\P$ is a $\left( {q+2\choose 2}, \, 3, \, (q+1), \, \frac{q+2}{2} \right)_q$   projective system, and it corresponds to an $[{q+2\choose 2}, 3]_q$ two-weight code with distinct nonzero weights 
$$
w_1 = {q+2\choose 2} - (q+1) = \frac{q(q+1)}{2} \quad \text{and} \quad w_2 = {q+2\choose 2} - \frac{q+2}{2} = \frac{q(q+2)}{2}. 
$$  
Also, the number of lines in $\widehat{\mathbb{P}^2}$ that intersect $\widehat{P}$ in $(q+1)$  points
is $|\P| = q+2$, whereas the number of lines in $\widehat{\mathbb{P}^2}$ that intersect $\widehat{P}$ in $\frac{q+2}{2}$ points
is $|\mathbb{P}^2 \setminus \P| = q^2-1$. Thus, in view of \eqref{AwN}, we see that 
the weight spectrum of the two-weight code corresponding to $\widehat{\P}$ is given by 
$$
A_{w_1}=(q+2)(q-1) \quad \text{and} \quad 	A_{w_2}=(q^2-1)(q-1).
$$
Furthermore, any hyperplane section of $\widehat{P}$ has to be either of the following two types: (i)  a set consisting of lines passing through a fixed $P\in \P$ and a varying point of $\P\setminus\{P\}$, or (ii) a set consisting of lines of the form $L_Q$ where $Q$ varies over a suitable subset of $\P$ having $(q+2)/2$ elements. Now a set of type (ii) has at least two lines and no two lines in this set can intersect in a point of $\P$. Hence a set of type (ii) can never be contained in any set of type (i). It follows that the purity criterion in Theorem~\ref{thm: geompure} is violated (for $r=1$). Equivalently, every $1$-dimensional subcode of $C$ has minimal support, and since $C$ has two distinct nonzero weights $d_1=w_1 < w_2$, we see that the criterion in Theorem \ref{thm: pure} is violated (for $i=1$). Thus, the resolution of $C$ is not pure. Moreover, in view of \eqref{eq: betaij-betasigma} and \eqref{eq:betaisigma}, we see that the resolution  has two twists at the first step, whereas it is pure at the second and third step, thanks to Corollary~\ref{cor: 2pure}. Hence, the resolution of $C$ is of the form 
$$
R(-d_3)^{\beta_{3, d_3}}\to R(-d_2)^{\beta_{2, d_2}}\to R(-w_2)^{\beta_{1, w_2}}\oplus R(-w_1)^{\beta_{1, w_1}} 
$$
where $w_1, w_2$ are as before and 
$$
d_2 = {q+2\choose 2} - 3 +2 = \frac{q(q+3)}{2} \quad \text{and} \quad  d_3= {q+2\choose 2} - 3 +3 = \frac{(q+1)(q+2)}{2}.
$$
Moreover, from Corollary~\ref{Cor:beta1}, we see that 
$$
\beta_{1, w_1} = (q+2)  \quad \text{and} \quad \beta_{1, w_2}
=(q^2-1).
$$
To determine the remaining Betti numbers, let us write $X_1=\beta_{1, w_1}$, $X_2=\beta_{1, w_2}$, $Y=\beta_{2, d_2}$, and $Z=\beta_{3, d_3}$. Then  the Boij-S\" oderberg equations~$\eqref{eq: boijsoderberg}$~give the following system of linear equations
	\begin{eqnarray*}
	1 -( X_1 + X_2) + Y - Z=0 \\
	- w_1X_1 - w_2 X_2 + d_2Y - d_3Z=0 \\
	- w_1^2X_1 - w_2^2 X_2 + d_2^2Y - d_3^2Z=0
	\end{eqnarray*}
	Putting the values of $w_1,\; w_2, \; d_2, \;d_3,\; X_1$ and $X_2$, we obtain 
$Y=\frac{q(q+1)(q+2)}{2}$ and $Z=\frac{q^2(q+1)}{2}$. This determines the resolution of the code $C$ corresponding to $\widehat{\P}$. 
\end{example}

\begin{example}[TF2]
	\label{examp: TF2}
Assume that $q$ is even with $q>2$. 
Suppose $h$ is an integer such that $1<h<q$ and $h$ divides $q$. 
Following Denniston \cite{Den}, a \emph{maximal arc}  in the projective plane $\PP^2$ may be defined as a set of points meeting every line in $h$ points or none at all.  
Let $\P\subseteq\mathbb{P}^2$ be a maximal arc consisting of $n= 1+(q+1)(h-1)$ points.
It has been shown by Denniston \cite{Den} that such maximal arcs exist. Since $|L\cap\P|= 0\text{ or }h$, for any line $L$ in $\PP^2$, we see that the $[n,3]_q$-code $C$ corresponding to $\P$ is a two-weight code (cf. \cite{CK}) whose nonzero weights are  $q(h-1)$ and $n$. 
Since the second weight of $C$ is the length of $C$, a minimal $1$-dimensional subcode of $C$ must be of minimum weight. Hence, by Theorem~\ref{thm: pure}, the minimal free resolution of the code $C$ is pure. 
Thus, in view of Corollary~\ref{cor: 2pure}, we see that the resolution of $C$ is of the form:
	$$
	R(-d_3)^{\beta_{3, d_3}}\to R(-d_2)^{\beta_{2, d_2}}\to R(-d_1)^{\beta_{1, d_1}}
	$$
	where $d_1=q(h-1)$, $d_2=(q+1)(h-1)$ and $d_3= 1+(q+1)(h-1)$. Using the Herzog-K{\"{u}}hl formula, one can compute the Betti numbers, and they are
	$$
	\beta_{1, d_1}=(q+1)^2-\frac{q}{h},\quad \beta_{2, d_2}=qn, \quad \text{and}\quad \beta_{3, d_3}=(h-1)^2(q+1)\frac{q}{h}.
	$$
\end{example}	
	
\begin{example}[$\text{TF}2^{d}$]
	\label{examp:TF2d}
Let $q, h, n$ and $\P$ be as in Example~\ref{examp: TF2}. Consider the dual  projective plane $\widehat{\mathbb{P}^2}$ of $\mathbb{P}^2$, and let $\widehat{\P}=\{L\in\widehat{\mathbb{P}^2}: |L\cap \P|= h\}$.  
Now there are exactly $(q+1)$ lines passing through a point of $\PP^2$, and in case this point is in $\P$, then such a line intersects $\P$ in exactly $h$ points. Since $|\P|= n$, 
it follows that 
$$
 \hat{n}:= |\widehat{\P}|= \frac{(q+1)n}{h} =  \frac{(q+1)\left( 1+(q+1)(h-1)\right)}{h}.
$$
Next we want to understand the intersection of $\widehat{\P}$ with a hyperplane of $\widehat{\mathbb{P}^2}$. Note that a hyperplane, say $H$, of $\widehat{\mathbb{P}^2}$ corresponds to a point, say $P$, of  $\mathbb{P}^2$, and 
$$
H\cap \widehat{\P}=\{L\subset \mathbb{P}^2:\; L\text{ is a line passing through }P\text{ and }|L\cap\P|=h\}. 
$$
Therefore 	$|H\cap \widehat{\P}|$ is $(q+1)$ or $n/h$, according as $P\in \P$ or $P\not\in \P$.  Thus $\widehat{\P}$ is an 
$(\hat{n}, \; 3, \; (q+1), \; \frac{n}{h})_q$ projective system  
and 
the corresponding $[\hat{n}, \; 3]_q$-code is a  two-weight code  with distinct nonzero weights given~by 
$$
w_1= \hat{n} - (q+1)  = \frac{q(q+1)(h-1)}{h} \quad \text{and} \quad w_2=\hat{n} - \frac nh = \frac{qn}{h}.
$$
Using similar arguments as in Example~\ref{exam: TF1d}, we see  that the weight spectrum of this code is given by 
$$
	A_{w_1}=(q-1)n  \quad \text{and} \quad A_{w_2}=(q-1)(q+1)(q - h +1), 
$$
and also that the resolution of this code is of the form
$$
R(-d_3)^{\beta_{3, d_3}}\to R(-d_2)^{\beta_{2, d_2}}\to R(-w_2)^{\beta_{1, w_2}}\oplus R(-w_1)^{\beta_{1, w_1}} 
$$	
where $\beta_{1, w_1}=n=  1+(q+1)(h-1)$ and $\beta_{1, w_2}=(q+1)(q - h +1)$, and in view of Corollary~\ref{cor: 2pure}, $d_2 =  \hat{n} - 1$ and $d_3 =  \hat{n}$. 
As in Example~\ref{exam: TF1d}, 
using the Boij-S\" oderberg equations $\eqref{eq: boijsoderberg}$ and putting all known values, we obtain
$$
\beta_{2, d_2}=\frac{q(q+1)(qh +h-q)}{h} \quad \text{and} \quad \beta_{3, d_3}= \frac{q^2(q+1)(h-1)}{2}.
$$

\end{example}

We remark that when $q>2$, Examples \ref{exam: TF1} and \ref{exam: TF1d} are special cases of 
Examples~\ref{examp: TF2} and \ref{examp:TF2d}, respectively, with $h=2$. 

\begin{example}[TF3]
	\label{examp: TF3}
Assume that $q > 2$. 
In the finite projective $3$-space $\PP^3$ over $\Fq$, an \emph{ovoid} may be defined as a set of $q^2+1$ points,  no three of which are collinear (see, e.g., Dembowski \cite[p. 48]{Demb}).
Suppose $\P$ is an ovoid in $\PP^3$. Then for any hyperplane $H$ of $\PP^3$, the intersection $\P\cap H$ is an ovoid in $H\simeq \PP^2$, and hence using  \cite[p. 48, \S 49]{Demb}, we see that $|H\cap\P|= 1\text{ or }q+1$. 
Let $C$ be the corresponding linear code. Then $C$ is a two-weight code of length $n=q^2+1$, dimension $k=4$,  and weights $w_1=q(q-1)$ and $w_2=q^2$. The resolution of this code $C$ is pure. To see this, note that if $\Pi$ is a hyperplane in $\mathbb{P}^3$ intersecting $\P$ at only one point, then there is another hyperplane $H$ with $|H\cap \P|= q+1$ and $\Pi\cap\P\subset H\cap \P$. More precisely, let $\Pi\cap \P=\{P\}$ and let $Q\in\P$ be any point other than $P$. Take any hyperplane $H$ passing through $P$ and $Q$. Since $|H\cap \P| > 1$, we must have $|H\cap \P|=q+1$. Further, $\Pi\cap\P\subset H\cap \P$. It follows that  all minimal codewords of $C$ are of minimum weight. Hence, by Corollary~\ref{Cor:beta1}, we see that $\beta_{1,j} = 0$ for all $j \ne w_1$, i.e., 
the resolution of $C$ is ``pure at the first step". Next, observe that the maximum possible cardinality of $L\cap \P$ is $2$ for any line $L$ in $\PP^3$, and there do exist lines $L$ for which $|L\cap \P|=2$. Hence,  
	$
	d_2(C)= n-2 = q^2-1.
	$
	Consequently, $C$ is a $2$-MDS code, and hence by Corollary~\ref{Cor: MDS}, the resolution is linear after the second step.  This proves that the resolution of $C$ is pure and is of the form
	$$
	R(-(q^2+1))^{\beta_{4, q^2+1}}\to R(-q^2)^{\beta_{3, q^2}}\to R(-(q^2-1))^{\beta_{2, q^2-1}}\to R(-q(q-1))^{\beta_{1, q(q-1)}}
	$$
where the Betti numbers can be obtained 
from  Herzog-K{\"{u}}hl formula \eqref{eq: herzogkuhl} as follows. 
	\begin{eqnarray*}
		& \beta_{4, q^2+1}=\frac{q^3(q-1)^2}{2},\;\quad \quad  \quad  &\ \beta_{3, q^2}=(q-1)(q^2-1)(q^2+1), \\
		& \beta_{2, q^2-1}=\frac{q^3(q^2+1)}{2},\quad\text{and } \quad & \beta_{1, q(q-1)}=q(q^2+1). 
	\end{eqnarray*}

\end{example}

\begin{example}[RT3]
	\label{examp: Hermitian}
Assume that $k \ge 3$. 
Consider the quadratic extension $\mathbb{F}_{q^2}$ of $\Fq$ and the projective variety 
$\P^{ }_{k-2} \subset\mathbb{P}^{k-1}(\mathbb{F}_{q^2})$ 
defined by the equation
	$$
	X_1^{q+1}+\cdots+X_k^{q+1}=0.
	$$
Following Bose and Chakravarti~\cite{BC}, we may refer to $\P^{ }_{k-2}$ as the (nondegenerate) Hermitian variety of dimension $k-2$. Let $C^{ }_{k-2}$ be the 
$[n^{ }_k, k]^{ }_{q^2}$-code corresponding to $\P^{ }_{k-2}$, where $n^{ }_{k} := |\P^{ }_{k-2}|$. 
We know from \cite[Theorem 8.1]{BC} that
\begin{equation}\label{eq:nk}
n^{ }_{k} 
= \frac{\left(q^k - (-1)^k \right) \left(q^{k-1} - (-1)^{k-1}\right) } {q^2 -1}.
\end{equation}
To understand the weights of $C^{ }_{k-2}$, first note that since $x\mapsto x^q$ is an involutory automorphism of $\FF_{q^2}$, every hyperplane of $\PP^{k-1}(\FF_{q^2})$ is given by an equation of the form  $c_1^q X_1 + \dots + c_k^q X_k =0$ for some $\mathbf{c} = (c_1: \dots : c_k) \in \PP^{k-1}(\FF_{q^2})$; we denote this hyperplane by $H_{\mathbf{c}}$ and call it a \emph{tangent hyperplane} in case $\mathbf{c}\in \P^{ }_{k-2}$ (see, e.g., Chakravarti  \cite[\S 2]{C}). We remark that $H_{\mathbf{c}}$ and ${\mathbf{c}}$ determine each other. In other words, if ${\mathbf{c}}, {\mathbf{d}} \in \PP^{k-1}(\FF_{q^2})$, then:  $H_{\mathbf{c}} = H_{\mathbf{d}} \Leftrightarrow {\mathbf{c}} = {\mathbf{d}}$. 
Now from \cite[Theorem 3.1]{C} and from Theorem 7.4 as well as Theorem 8.1 (and its corollary) of \cite{BC}, we see that 
\begin{equation}\label{eq:hypsec}
|H_{\mathbf{c}} \cap \P^{ }_{k-2}| = \begin{cases} n_{k-1} & {\rm if } \ H_{\mathbf{c}} \text{ is not a tangent hyperplane,} \\
1+ q^2n_{k-2} & {\rm if } \  H_{\mathbf{c}} \text{ is a tangent hyperplane,} 
\end{cases}
\end{equation}
where $n_{k-1}$ and $n_{k-2}$ are given by expressions similar to that in \eqref{eq:nk} with appropriate substitution. 
Thus, it follows that $C^{ }_{k-2}$ is a two-weight code. We will now discuss the nature of the resolution of this code when $k=3$ and $k=4$.

First, suppose $k=3$. Then $\P_1$ is the Hermitian curve consisting of $q^3+1$ points. If $L$ is a line in $\mathbb{P}^{2}(\mathbb{F}_{q^2})$, then by \eqref{eq:hypsec}, $|L\cap \P_1|$ is either $q+1$ or $1$, and thus the two nonzero weights of $C_1$ are given by 
$w_1 
= q(q^2-1)$ and $w_2 = q^3$.  Moreover, if $L_1$ is a tangent line to $\P_1$ so that $L_1\cap \P_1$ consists of a single point, say $P$, 
then by choosing another point $Q$ of $\P_1$ and a line $L_2$ passing through $P$ and $Q$, we find 
$$
|L_2\cap \P_1|=q+1 \quad \text{and} \quad  	L_1\cap\P_1\subset L_2\cap \P_1.
$$
Consequently, every $1$-minimal subcode of $C_1$ has support weight $w_1 = d_1(C_1)$. Thus, as in Example~\ref{examp: TF3}, we can deduce from Corollary~\ref{Cor:beta1} that the resolution of $C_1$ is ``pure at the first step". This together with Corollary \ref{cor: 2pure} shows that the resolution of $C_1$ is pure and it looks like
$$
 R(-(q^3+1))^{\beta_{3, q^3 +1}}\to R(-q^3)^{\beta_{2, q^3}}\to R(-q(q^2-1))^{\beta_{1, q(q^2-1)}}
$$
where the Betti numbers can be obtained 
from  Herzog-K{\"{u}}hl formula \eqref{eq: herzogkuhl} as follows.
$$
{\beta_{1, q(q^2-1)}}= q^2(q^2-q+1),\; {\beta_{2, q^3}}= (q^3+1)(q^2-1)\text{ and }{\beta_{3, q^3 +1}}= q(q^2-1)(q^2-q+1).
$$
	
Next, suppose $k=4$. Here $\P_2$  is the Hermitian surface 
with $(q^2+1)(q^3+1)$ points. Further, by \eqref{eq:hypsec}, a section $\P_2\cap H_{\mathbf{c}}$ of the Hermitian surface by a tangent hyperplane has $ q^3 +q^2 +1 $ points, while a section $\P_2\cap H_{\mathbf{d}}$ by a non-tangent hyperplane has $q^3+1$ points. Moreover, $\P_2\cap H_{\mathbf{d}} \not\subseteq \P_2\cap H_{\mathbf{c}}$ for any $\mathbf{c} \in \P_2$ and  $\mathbf{d} \in \PP^3(\FF_{q^2}) \setminus \P_2$. Indeed, by \cite[Theorem 3.1]{C}, $\P_2\cap H_{\mathbf{d}}$ is nondegenerate in $H_{\mathbf{d}} \simeq \PP^2$ and so the linear span of points in 
$\P_2\cap H_{\mathbf{d}}$ 
is  $H_{\mathbf{d}}$. But then $\P_2\cap H_{\mathbf{d}} \subseteq \P_2\cap H_{\mathbf{c}}$ would imply that 
$ H_{\mathbf{d}} \subseteq H_{\mathbf{c}}$ and hence $H_{\mathbf{d}} = H_{\mathbf{c}}$, which is a contradiction.  (Alternatively, if $\P_2\cap H_{\mathbf{d}} \subseteq \P_2\cap H_{\mathbf{c}}$, then 
$q^3+q^2+1 =|\P_2\cap H_{\mathbf{d}}| = |\P_2\cap H_{\mathbf{d}} \cap H_{\mathbf{c}}| \le |H_{\mathbf{d}} \cap H_{\mathbf{c}}| =q^2+1$, which is  a contradiction.) At any rate, it follows that $C_2$ is a two-weight code with the nonzero weights $w_1= q^5$ and $w_2=q^5+q^2$, and moreover, every $1$-dimensional subcode of $C_2$ is minimal. Thus, the resolution of $C_2$ has two twists at the first level and by Corollary \ref{Cor:beta1}, 
the corresponding Betti numbers 
are as follows.
$$
\beta_{1, w_1} = |\P_2| =  (q^2+1)(q^3+1) \quad \text{and} \quad  
\beta_{1, w_2} = | \PP^3(\FF_{q^2}) \setminus  \P_2| = 
q^3(q^2+1)(q-1).
$$
To understand the behavior of the resolution at the second step, we consider $2$-dimensional subcodes of $C_2$ and determine which of these are minimal. Equivalently, we consider the sections $\P_2\cap L$ of the Hermitian surface with a line $L$ in $ \PP^3(\FF_{q^2})$. It is shown in \cite[\S 10]{BC} (see also \cite[\S 5.2]{C}) that 
$|\P_2\cap L|$ can only take $3$ possible values, namely, $q^2+1$,  $q+1$, or $1$.  
Accordingly, the line $L$ is referred to as a generator, secant line, or tangent line, respectively. It is clear that if $L$ is a tangent line, then there is a non-tangent line $L'$ such that $\P_2\cap L \subset \P_2\cap L'$. On the other hand, if $L$ is a secant line, then $\P_2\cap L \not\subset \P_2\cap L'$ for any generator $L'$, because there is a unique line passing through any two points of  $\PP^3(\FF_{q^2})$.   It follows that there are two types of $2$-minimal subcodes of $C_2$, one with support weight $d_2=|\P_2| - (q^2+1) = q^3(q^2 + 1)$ and another with support weight  $d_2^\prime= |\P|-(q+1)=q(q^4 +q^2 +q -1)$.  Thus, it follows from \eqref{eq: betaij-betasigma} and \eqref{eq:betaisigma} that the resolution of $C_2$ has two twists at level $2$, and these correspond to the above values of $d_2$ and $d_2'$. Finally, we note that $C_2$ is $3$-MDS and by Corollary \ref{cor: 2pure}, the resolution of $C_2$ is pure at the third and fourth steps. Thus, we can conclude that the minimal free resolution of $C_2$ has the form 
$$
	R(-d_4)^{z}\to R(-d_3)^{y}\to R(-d_2^\prime)^{x_1}\oplus R(-d_2)^{x_2}\to R(-w_2)^{\beta_{1, w_2}}\oplus R(-w_1)^{\beta_{1, w_1}}
	$$
	where $w_1,w_2, d_2,d_2^\prime$ are as before, $d_3= (q^2+1)(q^3+1)-1$, $d_4= (q^2+1)(q^3+1)$, and 
$x_1, x_2, y, z$ denote the undetermined Betti numbers, namely, 
$$
x_1 = \beta_{2, d_2^\prime}, \quad x_2 = \beta_{2, d_2}, \quad y = \beta_{3, d_3}, \quad 
 z = \beta_{4, d_4}.
$$ 
To determine these, we note that the Boij-S\"oderberg equations \eqref{eq: boijsoderberg} give rise to
\begin{eqnarray*}
1 - (\beta_{1, w_1}+ \beta_{1, w_1}) + (\beta_{2, d_2}+ \beta_{2, d_2'}) -\beta_{3, d_3} +\beta_{4, d_4}=0\\
-(w_1\beta_{1, w_1}+ w_2\beta_{1, w_1}) + (d_2\beta_{2, d_2}+ d_2'\beta_{2, d_2'}) -d_3\beta_{3, d_3} +d_4\beta_{4, d_4}=0\\
-(w_1^2\beta_{1, w_1}+ w_2^2\beta_{1, w_1}) + (d_2^2\beta_{2, d_2}+ d_2'^2\beta_{2, d_2'}) -d_3^2\beta_{3, d_3} +d_4^2\beta_{4, d_4}=0\\
-(w_1^3\beta_{1, w_1}+ w_2^3\beta_{1, w_1}) + (d_2^3\beta_{2, d_2}+ d_2'^3\beta_{2, d_2'}) -d_3^3\beta_{3, d_3} +d_4^3\beta_{4, d_4}=0
\end{eqnarray*}
and this is a system of four linear equation in four unknowns. Substituting the values of the known quantities and solving, we obtain 
\begin{eqnarray*}
\beta_{2, d_2}& =& q^2(q^3+1)(q+1),\qquad \qquad \qquad \qquad \ \ \beta_{2, d'_2}=q^6(q^2+1)(q^2-q+1), \\
	\beta_{3, d_3}&=&q^3(q^2+1)(q^3+1)(q^3-q+1), \ \text{ and }\ \beta_{2, d_4}=q^9(q^2-q+1).
\end{eqnarray*}
Thus, the resolution of $C_2$ is completely determined. 

We remark that when $k\ge 5$, the minimum weight of $C_{k-2}$ will be $n_k - n_{k-1}$ or $n_k - 1 - q^2n_{k-2}$ according as $k$ is odd or even. Moreover, a cardinality argument similar to the one in the case of $k=4$ will show that all $1$-dimensional subcodes of $C_{k-2}$ are minimal, and hence the resolution is not pure (at the first step). It would be interesting to completely determine the resolution of $C_{k-2}$, in general. 
\end{example}

\section*{Acknowledgments}
We are grateful to Trygve Johnsen for helpful discussions and encouragement. The authors are also grateful to the 
Indo-Norwegian project MAIT: EECC, supported by the Research Council of Norway and the Dept. of Science and Technology of Govt. of India, which facilitated mutual visits that were helpful to complete this work. 
The second named author would like to acknowledge past funding, during earlier stages of this work, 
 from the Council of Scientific and Industrial Research, India
for  a doctoral fellowship at IIT Bombay, and from 
H. C. \O{}rsted  cofund postdoctoral fellowship at the Technical University of Denmark. 


\end{document}